\newtheorem{lem}{Lemma}
\newtheorem{theorem}{Theorem}
\newtheorem{defn}{Definition}
\newtheorem{prop}{Proposition}
\newtheorem{cor}{Corollary}
\newtheorem{rem}{Remark}
\def\mb{\mathbf}
\def\mc{\mathcal}
\begin{document}
\title{\bf \Large On the genericity properties in networked estimation:\\ Topology design and sensor placement}
\author{Mohammadreza Doostmohammadian and Usman A. Khan
\thanks{
Department of Electrical and Computer Engineering, Tufts University, {\texttt{\{mrd,khan\}@ece.tufts.edu}}. }}

\maketitle

\begin{abstract}
In this paper, we consider networked estimation of linear, discrete-time dynamical systems monitored by a network of agents. In order to minimize the power requirement at the (possibly, battery-operated) agents, we require that the agents can exchange information with their neighbors only \emph{once per dynamical system time-step}; in contrast to consensus-based estimation where the agents exchange information until they reach a consensus. It can be verified that with this restriction on information exchange, measurement fusion alone results in an unbounded estimation error at every such agent that does not have an observable set of measurements in its neighborhood. To over come this challenge, state-estimate fusion has been proposed to recover the system observability. However, we show that adding state-estimate fusion may not recover observability when the system matrix is structured-rank ($S$-rank) deficient.

In this context, we characterize the state-estimate fusion and measurement fusion under both full $S$-rank and $S$-rank deficient system matrices. The main results of this paper are the following. \emph{Firstly}, we show that when the system matrix has full $S$-rank, state-estimate fusion alone (with no measurement fusion) can recover the observability. Subsequently, we characterize the minimal topology for inter-agent communication required for a stable networked estimator. \emph{Secondly}, we provide methodologies to recover (networked) estimator observability when the system matrix is $S$-rank deficient. In particular, we introduce a novel agent classification based on their local measurements and identify the agents that are crucial for stable estimation error. We then provide topology modifications and sensor placement techniques to recover observability in the $S$-rank deficient scenario. Finally, we provide an iterative method to compute the local estimator gain at each agent that results into a stable estimation error once the observability is ensured using the aforementioned techniques.

\textit{Keywords:} Networked estimation, Observability, Structured system theory, Generic rank

\end{abstract}

\section{Introduction}\label{intro}

Estimation of dynamical systems with observations distributed
among a network of agents is an important field of research, where the idea is to assign a group of agents to monitor a certain system or phenomenon of interest. Agents are distributed in the sense that each agent can only measure some of the states of a dynamical system, referred to as local measurements. For example, a group of sensors spread geographically over a large region to monitor daily temperature evolution. The measurement data and dynamical models are further corrupted by noise and disturbances. The objective is to enable each agent to make an unbiased decision on the global state relying only on its own measurement and the measurements from its immediate neighbors. Such a scheme is often referred to as \emph{networked estimation} where the term network implies that the information is restricted on a sparse network.

Networked estimation is preferable to a wide range of applications as it is scalable and further requires less communication load at each individual agent, in contrast to the centralized case where each agent may require repeated long-distance communication to a central location. Applications of networked estimation include social networks \cite{asu} to learn global beliefs based on partial understanding of the state of the society, market, politics, etc., monitoring physical processes and environmental spatio-temporal fields \cite{kriged, MiadCons}, state estimation in power systems \cite{camsap11, XieKar11, ZhangReMeDySE}, and multi-agent systems such as collaborative target tracking and flocking of mobile robots \cite{flock}, see for example Fig.~\ref{drones}.
\begin{figure}
\centering
\includegraphics[width=1.5in]{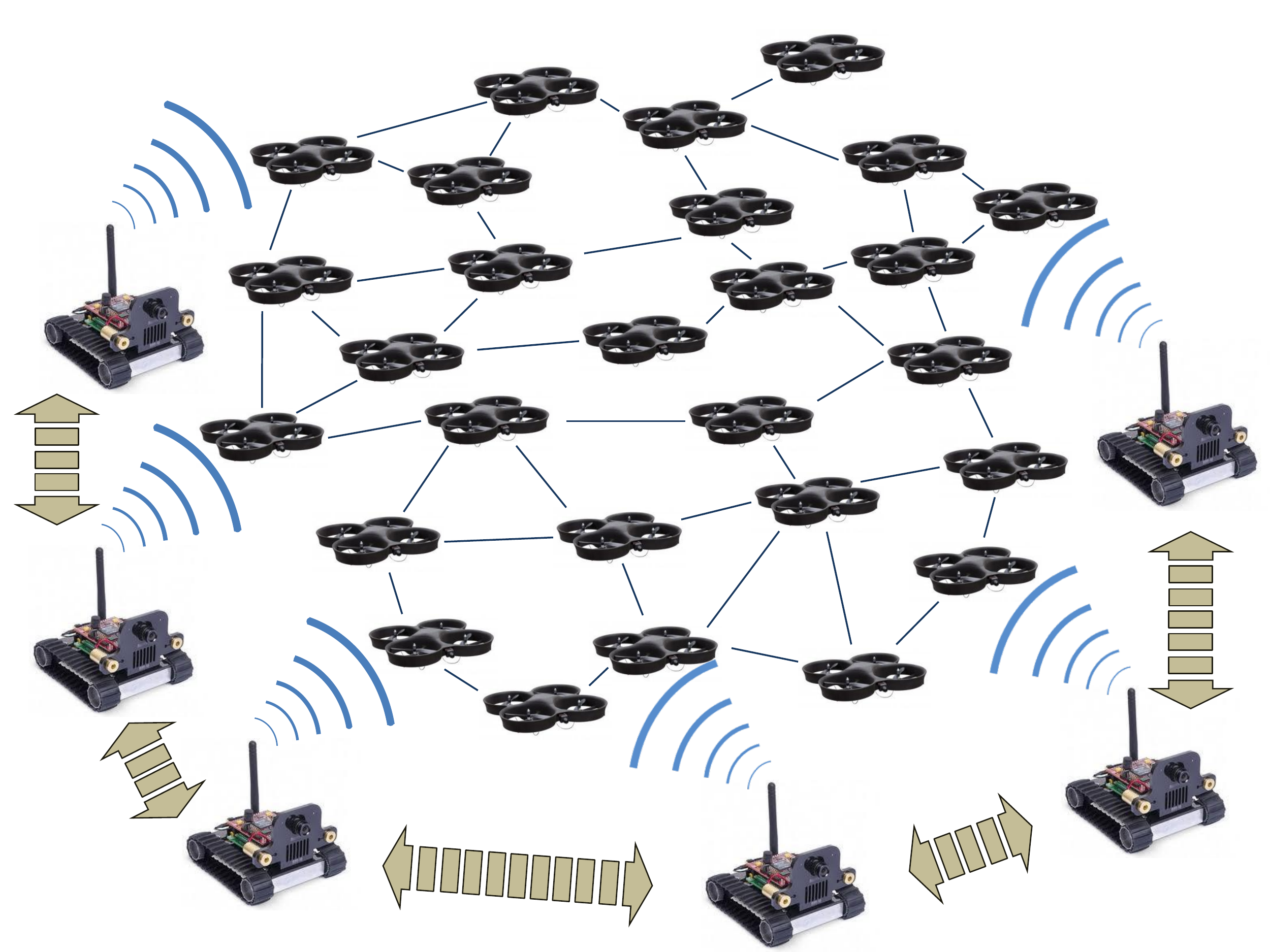}
\caption{A group of mobile ground-robots are tracking a large flock of quad-copters. Every ground-robot is monitoring part of the flock, and then shares its partial data with other ground-robots through a communication network.}
\label{drones}
\end{figure}

A variety of solutions exists for networked estimation starting from the earlier work in \cite{hashem:88, Mutambara_book} and references therein on parallel Kalman filter architectures for all-to-all connected networks, to more recent diffusion-based schemes via least mean square implementation, such as in Kalman filtering and smoothing \cite{sayed-kf} and distributed binary detection \cite{sayed11}. Meanwhile, incremental adaptive distributed strategies can be found in \cite{sayed-increment,incrementbertsekas} along with distributed moving horizon estimation \cite{farina-mhe} to minimize estimation error variance for constrained problems. State estimators based on low-cost single-bit data transmission is proposed in \cite{sign} with binary sign of innovations (sign of difference of measurement and estimated value). Information theoretic approach based on consensus over the Kullback-Leibler average of Gaussian PDFs is exploited in \cite{battistelli2011information}. The literature can also be classified into static and dynamic estimation. In static estimation \cite{kriged, sayed-kf, sayed11, sayed-increment, sayedtu12, sayedchen11}, the target state to be estimated does not change over time, while dynamic estimation \cite{camsap11, hashem:88, farina-mhe, sign, battistelli2011information, olfati:05, zamp:07, usman_tsp:07, Msechu:08, usman_acc:11} takes into account the time-evolution of the system\footnote{As stated in \cite{sayedtu12}, diffusion algorithms can be extended for non-stationary (dynamic) tracking when the target is not moving too fast, i.e. its state is relatively stationary over a period such that the algorithm can converge.}.

Consensus-based strategies have recently found a lot of interest in the context of sparsely-connected networks, where the main focus is to reduce the uncertainty of individual estimates by averaging on collaborative data. Early work in \cite{olfati:05, zamp:07, usman_tsp:07, Msechu:08, usman_acc:11} considers a two time-scale method where consensus is implemented at a time-scale different than the system dynamics. These results require that a consensus is reached within every two time-steps of the system dynamics, and is thus, challenged with a large number (infinite, in general) of consensus iterations between every two steps of the dynamics. To elaborate this, consider Fig.~\ref{cps_ts_1}(a), where a large number ($\rightarrow\infty$) of data fusion iterations are implemented between every two successive time-steps, $k$ and $k+1$, of the dynamics.
\begin{figure}
\centering
\includegraphics[width=2.5in]{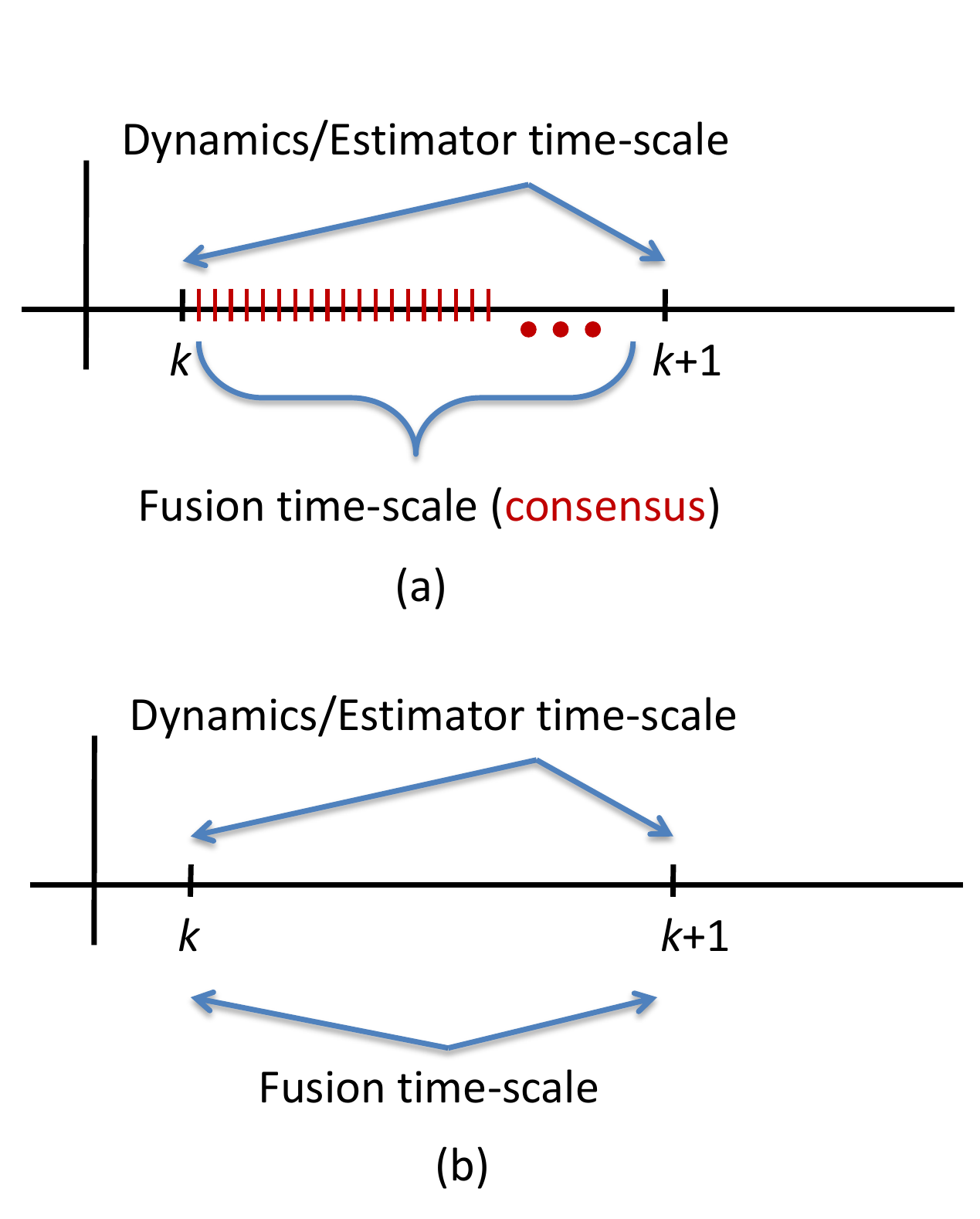}
\caption{(a) The traditional two-time scale consensus-based approach. (b) single time-scale approach.}
\label{cps_ts_1}
\end{figure}
This approach requires communication over a much faster rate than the sampling of the dynamics, and thus, in general, becomes practically infeasible when the underlying system is operating under power constraints and has restricted communication and computation budgets.

In contrast to the two time-scale approach to distributed estimation, recently References \cite{asu, sayedtu12, sayedchen11,battistelli2011information, usman_acc:11, usman_cdc:10,kar-moura-ramanan-IT-2008} studied the behavior of networked estimators when the communication time-scale is the same as the time-scale of the dynamics, as shown in Fig.~\ref{cps_ts_1}(b). This method is practically feasible for real-time applications and computationally efficient as compared to the two time-scale approach. A preliminary study on this single time-scale estimator is carried out in \cite{usman_cdc:10}, where it is shown that a particular linear networked estimator has a bounded estimation error if the two-norm of the system matrix is less than the network tracking capacity--a function of the communication network and observation models. Notice that in the two time-scale method, the communication network becomes irrelevant due to more information exchanges among the individuals (the information in a sparsely connected graph is equivalent to the information in a fully connected graph when a large number of information exchanges are carried out). Therefore, the performance and properties of the underlying estimator depends only on the data fusion principles among the agents. However, in the single time-scale scenario of Fig.~\ref{cps_ts_1} (b), the underlying agent network remains sparse and an arbitrary communication network may not suffice to make the networked estimation error stable (e.g., see \cite{usman_cdc:10,usman_cdc:11, asilomar11}).

In this context, the key problem is to design the structure of the inter-agent communication according to the underling fusion rules in order to recover the observability of the networked estimator. In this paper, we use a variant of the Networked Kalman-type Estimator (NKE) protocol, initially introduced in \cite{usman_cdc:10}. The main contribution is to determine the communication network among the agents to recover the observability of the underlying estimation protocol, given that each agent may not be locally (in its neighborhood) observable. We study the observability with a structural point of view \cite{rein_book,liu-nature,sauter:09,woude:03,boukhobza-recovery,equitable-egerstedt,egerstedt-nature} in the sense that we explore the generic properties of the system. The generic properties are applicable to any choice of system parameters as long as the sparsity structure (zeros and non-zeros) is not violated. The generic approach is helpful when the underlying system parameters may change depending on the system operating point (linearization of non-linear dynamics) and is further significant in communication network design as the approach is independent of the exact value of the weights chosen for data fusion. Moreover, this implies that for smooth non-linear systems with fixed structure Jacobian matrix, similar analysis of the networked observability can be applied.

Comparing with other work in the literature, we consider single time-scale estimation, as opposed to the multi time-scale estimation in \cite{olfati:05, zamp:07, usman_tsp:07, Msechu:08} and the \textit{vanishing time-step} algorithms proposed in \cite{ram10, binachi11}. Unlike \cite{hierarchy-giannakis,hierarchy-egerstedt}, we do not impose an agent hierarchy (i.e., we assume the processing/communication duties at all agents are the same). Avoiding agent hierarchy increases the reliability of node/link failure. We further do not require the communication network to be (strongly) connected \cite{sayed-kf,sayed11, usman_cdc:11,battistelli2011information,sayedtu12,zamp:07} or for it to include a cyclic path \cite{sayed-increment,sayedchen11}. Our goal is to design the network with minimal communication. Specifically, we use methodologies that are independent of exact system values and rely only on the \emph{structure} of the underlying system. This leads to a robust estimator design where the analysis is not algebraic, as in the conventional Grammian or PBH observability tests, but graph-theoretic \cite{sauter:09,woude:03}.

We now describe the rest of the paper. Section~\ref{pre} provides preliminary material on basic dynamical system estimation and structured systems theory, whereas Section~\ref{pfff} presents our problem formulation. Section~\ref{as_cl} enlists our assumptions and describes a novel agent classification method. Section~\ref{main} covers the main results of this paper on state and output fusion, whereas local gain design is explored in Section~\ref{K}. We provide an illustrative example and simulations in Section~\ref{example}, and finally, Section~\ref{conc} concludes the paper.

%%%%%%%%%%%%%%%%%%%%%%%%%%%%%%%%%%%%%%%%%%%%%%%%%%%%%%%%%%%%%%%%%%%%%%%%%%%%%%%%%%%%%%%%%%%%%%%%%%%%%%%%%%%%%%%
\section{Background and Preliminaries}\label{pre}
We consider the system model to be a discrete-time linear dynamical system:
\begin{eqnarray}\label{sys1}
\mb{x}_{k+1} &=& A\mb{x}_k + \mb{v}_k,
\end{eqnarray}
where~$\mb{x}_k\in\mathbb{R}^n$ is the state vector, ~$A=\{a_{ij}\}\in\mathbb{R}^{n\times n}$ is the system matrix, and~$\mb{v}_k\sim\mathcal{N}(0,V)$ is the system noise. We note that the main emphasis of this paper is not on modeling but on structural characteristics of the underlying system and the results we present hold for any phenomenon following Eq.~\eqref{sys1}. We assume that the dynamical system is monitored by a network of~$N$ agents such that each sensor~$i$ has the following observation model:
\begin{eqnarray}\label{sys2_ag}
\mb{y}^i_k &=& C_i\mb{x}_k + \mb{r}^i_{k},
\end{eqnarray}
where~$\mb{y}^i_k\in\mathbb{R}^{p_i}$ is the output vector at agent~$i$,~$\mb{r}^i_k\sim\mathcal{N}(0,R_i)$ is the output noise, and~$C_i$ is the output matrix at agent~$i$. With this notation, we can write the global observation model as
\begin{eqnarray}
\mb{y}_k = C\mb{x}_k + \mb{r}_k,
\end{eqnarray}
where
\begin{eqnarray}\label{Cdef}
\mb{y}_k =
\left[
\begin{array}{c}
\mb{y}_k^1\\
\vdots\\
\mb{y}_k^N
\end{array}
\right],~~
C =
\left[
\begin{array}{c}
C_1\\
\vdots\\
C_N
\end{array}
\right],~~
\mb{r}_k =
\left[
\begin{array}{c}
\mb{r}_k^1\\
\vdots\\
\mb{r}_k^N
\end{array}
\right],
\end{eqnarray}
$\mb{r}_k\sim\mathcal{N}(0,R)$ is the global observation noise
with~$R = \mbox{blockdiag}[R_1,\ldots,R_N]$, and~$C=\{c_{ij}\}$ is
the global output matrix.

\subsection{Centralized estimator}\label{cntr}
Let~$\widehat{\mb{x}}^c_{k|k}$ be the centralized Kalman estimator
\cite{kalman:61} at time~$k$ given all the observations,
$\mb{y}_k$, up to time~$k$. It can be shown that the error, $\widehat{\mb{e}}^c_{k|k} = \mb{x}_k - \widehat{\mb{x}}^c_{k|k}$, in this estimator is given by 
\begin{eqnarray}\label{ge}
\widehat{\mb{e}}^c_{k|k} = (A - K_cCA)\widehat{\mb{e}}^c_{k-1|k-1} + \eta_k,
\end{eqnarray}
where~$K_c$ is the centralized Kalman gain and the vector~$\eta_k$
collects the remaining terms that are independent of
$\widehat{\mb{e}}^c_{k-1|k-1}$. It is well known that the
centralized Kalman error,~$\widehat{\mb{e}}^c_{k|k}$ is stable if and only if all the unstable modes (eigenvalues) of the system are observable. For the ease of explanation, we assume that there are no stable unobservable nodes. In other words, detectability and observability are equivalent throughout this paper.

In the traditional sense of $n$-step $(A,C)$-observability, the observability Gramian is given by
\begin{eqnarray}\label{pbh}
\mathcal{O} = \left[C^T ~~ A^TC^T ~ ... ~ (A^{n-1})^TC^T\right].
\end{eqnarray}
Algebraic tests for observability check the Gramian, $\mathcal{O}$, for being full-rank or the matrix $\mathcal{O}^T\mathcal{O}$ for being invertible. An alternative method is the PBH (Popov-Belevitch-Hautus) observability test \cite{hautus}, which requires the matrix,~$[A^T-sI ~~ C^T]$, to be full-rank for all $s$. The matrix~$[A^T-sI]$ is full rank for all values of $s$ other than the eigenvalues of $A$ and, therefore, the PBH test is needed to be checked \textit{only} for these values.

Note that, both these algebraic methods rely on the knowledge of exact values of each element in the matrices $A$ and $C$.  However, in many dynamical systems, only the sparsity (zero and non-zero pattern) of these matrices may remain fixed while the non-zero elements are subject to change. For example, when the elements of the concerned matrices depend on certain parameters or operating points. Hence, these conventional methodologies fail to check for observability in such cases and graph-theoretic techniques are to be employed. We introduce such graph-based methods below. 

%%%%%%%%%%%%%%%%%%%%%%%%%%%%%%%%%%%%%%%%%%%%%%
\subsection{Graph notations}
Let~$X=\{x_1,\ldots,x_n\}$ denote the
state set, and let~$Y=\{y_1,\ldots,y_p\}$ %(where$p=p_1+\ldots+p_N$)
denote the output set. We define the
\emph{system digraph} as~$G_A = (V,E)$, where~$V=X\cup Y$ is the
vertex set, and~$E$ is the edge set containing directed edges,
$(v_1,v_2)\in E$, of the form~$v_1\rightarrow v_2$ with
$v_1,v_2\in V$. The edge set~$E$ is defined as~$E_A\cup E_C$,
where~$E_A=\{(x_j,x_i)~|~a_{ij}\neq0\}$ and
$E_C=\{(x_j,y_i)~|~c_{ij}\neq0\}$. A \emph{path} of length~$\ell$
from~$v_1\in V$ to~$v_{\ell}\in V$ is such that there exists a
sequence of vertices,~$ v_1,v_2,\ldots,v_{\ell}$ with each subsequent edge,
$(v_1,v_2),(v_2,v_3),\ldots,(v_{\ell-1},v_\ell)\in E$. Here~$v_1$
is the \emph{begin-vertex} of the path and~$v_\ell$ is its
\emph{end-vertex}. Here, we assume that each vertex contained in a
path occurs only once (\emph{simple} path). A path is said to be
\emph{$Y$-topped} if it ends at a vertex in~$Y$. A digraph is called strongly connected if there exist a directed path from each vertex to every other vertex in the digraph. In a not strongly connected digraph, define \emph{Strongly Connected Components (SCC)} as its maximal strongly connected partitions or sub-graphs.
A \emph{cycle} is a simple path where the begin and end vertices are the same. Since the nodes in $Y$ have no outgoing link, nodes included in a cycle all belong to $X$. As an example, consider Fig.~\ref{figAC} which shows the system digraph of a dynamical system with~$n=7$ states (encircled) and~$N=3$ measurements (or agents) denoted by squares.
\begin{figure}
\centering
{\includegraphics[width=2.2 in]{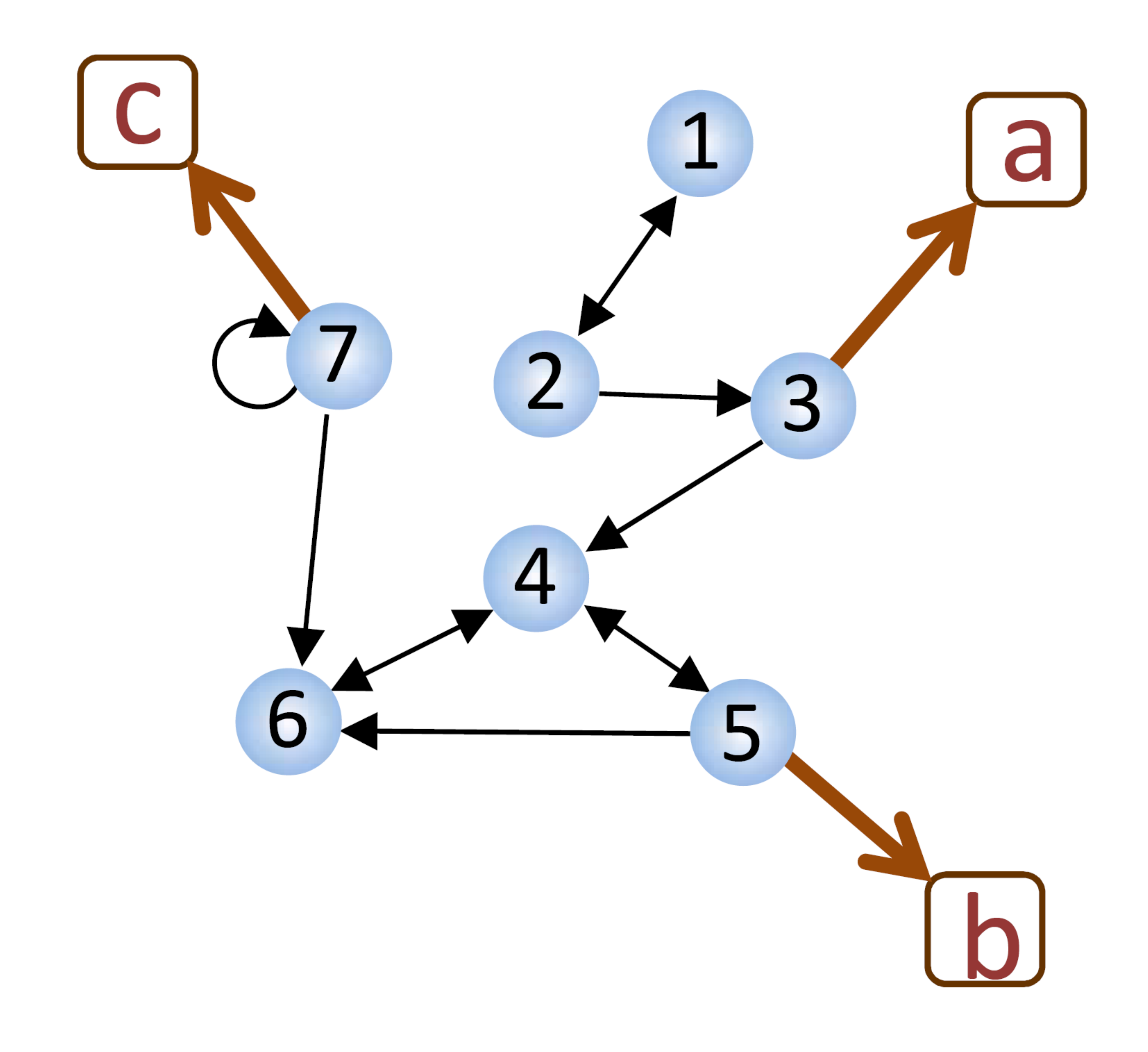}}
\caption{This digraph is an example of $(A,C)$~observable system based on the Theorem~\ref{th1}.}
\label{figAC}
\end{figure}
We now introduce some new concepts on SCCs over state vertices. These will be helpful in describing our results. 
\begin{defn}[Parent SCC]\label{parent}
A state SCC, is called a \textit{parent SCC}, if it has no outgoing link to any state vertex not belonging to itself.
\end{defn}
\begin{defn}[Child SCC]
Any SCC that is not a parent SCC is a \textit{child SCC}.
\end{defn}
Notice that the set of disjoint state SCCs in system matrix~$A$ can be explicitly characterized as either a parent or a child. As an example, the SCC containing vertices $\{4,5,6\}$ in Fig.~\ref{figAC} is a parent SCC, since there is no outgoing edges from its states to other states  $\{1,2,3,7\}$ not included in it. Furthermore, $\{1,2\}$ and $\{7\}$ are child SCCs. More details on parent/child SCCs and efficient algorithms for computing SCCs in a digraph can be found in~\cite{asilomar11} and ~\cite{algorithm}, respectively. We now use the concepts from this section to formally introduce structured systems theory and generic properties.

%%%%%%%%%%%%%%%%%%%%%%%%%%%%%%%%%%%%%%%%%%%%%%%
\subsection{Structured systems theory}
Structural analysis deals with system properties that do not
depend on the numerical values of the parameters but only on the underlying structure (zeros and non-zeros) of the system \cite{rein_book,liu-nature,sauter:09,woude:03,boukhobza-recovery,equitable-egerstedt,egerstedt-nature}. It turns out that if
a structural property is true for one \textit{admissible}
choice of non-zero elements as free parameters it is true for
\textit{almost all} choices of non-zero elements and, therefore, is called a \emph{generic} property of the
system~\cite{wang:73}. Furthermore, it can be shown that those
particular (non-admissible) choices for which the generic property
does not hold lie on some algebraic variety with zero Lebesgue
measure, for more details see
\cite{wang:73, wang:74}.

\begin{defn}[$S$-rank]\label{s-rank}
The structural rank (also called generic rank) is the maximum rank for all numerical values of the non-zero entries of the matrix~$A$. It is, in fact, an upper-bound on the numerical rank of~$A$.
\end{defn}

The $S$-rank as a generic property holds for almost all choices of
nonzero parameters of the matrix,~$A$. It is equal to the cardinality of the maximum matching associated to the \textit{bipartite graph} associated to the matrix,~$A$. In the algebraic sense, this is the maximum number of non-zero elements in distinct rows and columns of the matrix,~$A$ \cite{harary}. Details on the generic rank implication in graph theoretic sense and algorithms on maximum matching can be found in~\cite{sauter:09, algorithm}. Among other generic properties, controllability/observability are of interest in the context of this paper, see~\cite{lin,liu-nature,equitable-egerstedt,sherayas:08,sauter:09,egerstedt-nature} for details. We extend the following theorem from the generic controllability results in~\cite{woude:03}.

\begin{theorem}\label{th1}
A dynamical system is generically observable if and only if in the
system digraph:
\begin{enumerate}[(i)]
\item Every state is the begin-node of a path that ends in an output (termed as a~$Y$-topped path);
\item There exist a disjoint union of~$Y$-topped paths and cycles that cover all the state vertices.
\end{enumerate}
\end{theorem}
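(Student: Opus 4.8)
The plan is to derive Theorem~\ref{th1} by duality from the generic controllability characterization of van der Woude~\cite{woude:03}. First I would observe that the $n$-step observability Gramian in Eq.~\eqref{pbh} is precisely the controllability matrix of the pair~$(A^T,C^T)$; hence $(A,C)$ is generically observable if and only if $(A^T,C^T)$ is generically controllable, where the ``inputs'' of the dual system are the columns of~$C^T$, i.e.\ the output vertices~$y_1,\dots,y_p$. This reduces the theorem to re-expressing the known controllability conditions in terms of the digraph~$G_A$.

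Next I would set up the digraph translation. Reversing every edge of~$G_A$ yields the system digraph of the dual pair~$(A^T,C^T)$, in which each~$y_i$ is a source (input) node: an edge~$x_j\rightarrow x_i$ (from $a_{ij}\neq 0$) becomes~$x_i\rightarrow x_j$, and an edge~$x_j\rightarrow y_i$ (from $c_{ij}\neq 0$) becomes~$y_i\rightarrow x_j$. Under this reversal a~$Y$-topped path from~$x_j$ becomes an input-rooted path (a stem) ending at~$x_j$, a cycle maps to a cycle, and a \emph{disjoint} collection of such objects maps to a disjoint collection. Consequently, condition~(i) of the theorem is equivalent to input-connectivity of the dual digraph (every state reachable from some input), and condition~(ii) is equivalent to the existence of a spanning subgraph of the dual digraph consisting of disjoint input-stems and cycles.

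I would then invoke van der Woude's theorem: a structured pair is generically controllable if and only if (a) every state node is reachable from an input node and (b) the structured matrix formed by juxtaposing the system and input matrices has full generic row rank~$n$, the latter being equivalent, in graph terms, to the existence of a disjoint union of input-stems and cycles covering all state vertices. Applying this to~$(A^T,C^T)$ and mapping back through the edge reversal gives exactly conditions~(i)--(ii) for~$(A,C)$. The genericity itself --- that the property, once true for one admissible assignment of the nonzero entries, holds for almost all assignments outside a proper algebraic variety of zero Lebesgue measure --- is inherited from~\cite{wang:73,woude:03} and needs no separate argument here.

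The main obstacle is the bookkeeping in the second and third steps: one must verify carefully that condition~(ii) as phrased ($Y$-topped paths \emph{and} cycles, vertex-disjoint over the state set) corresponds precisely to full $S$-rank of the matrix $[\,A^{T}\ \ C^{T}\,]^{T}$ --- i.e.\ to a perfect matching of the~$n$ state-columns into state- and output-rows of the associated bipartite graph --- and, simultaneously, that combining this with the accessibility condition~(i) reproduces van der Woude's ``no-dilation'' cactus condition without double-counting vertices (in particular, the case where several $Y$-topped paths would terminate at the same output vertex is harmless for the rank count but should be stated). Once this dictionary between matchings, disjoint path/cycle families, and the reversed-digraph cactus structure is pinned down, the theorem follows directly from the dual result.
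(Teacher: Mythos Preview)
Your proposal is correct and follows essentially the same approach as the paper: the paper does not prove Theorem~\ref{th1} from scratch but simply states that it is the observability dual of the generic controllability characterization in~\cite{woude:03,rein_book}, which is exactly the duality argument you outline. Your additional bookkeeping on the edge-reversal dictionary and the matching interpretation of condition~(ii) goes beyond what the paper spells out, but it is consistent with (and in fact anticipates) the content of Lemma~\ref{lemth1}.
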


The following lemma is from \cite{rein_book}.

\begin{lem}\label{lemth1}
The condition (ii) in Theorem~\ref{th1} on the generic observability of~$(A_{n \times n},C_{m \times n})$ is equivalent to the following:
\begin{equation}
S\mbox{-rank}\left(
\left[
\begin{array}{c}
A\\
C
\end{array}
\right]
\right) = n.
\end{equation}
\end{lem}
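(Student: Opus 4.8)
The plan is to reduce condition~(ii) to a statement about a maximum matching in the bipartite graph associated with the stacked matrix $M=\left[\begin{smallmatrix}A\\ C\end{smallmatrix}\right]$, using the fact recalled after Definition~\ref{s-rank} that the $S$-rank of any matrix equals the size of a maximum matching between its rows and its columns. First I would set up this bipartite graph $B$ explicitly: its left vertex class is the set of $n$ columns of $M$, identified with the states $\{x_1,\dots,x_n\}$, and its right vertex class is the set of $n+m$ rows of $M$, identified with the states $\{x_1,\dots,x_n\}$ (the rows coming from $A$) together with the outputs $\{y_1,\dots,y_m\}$ (the rows coming from $C$). By the definitions of $E_A$ and $E_C$, the column $x_j$ is adjacent in $B$ to the row-vertex $x_i$ precisely when $(x_j,x_i)\in E_A$, and to the row-vertex $y_i$ precisely when $(x_j,y_i)\in E_C$; thus edges of $B$ correspond bijectively to edges of the system digraph $G_A$. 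Since $M$ has exactly $n$ columns, a maximum matching of $B$ has size at most $n$, so $S\text{-rank}(M)=n$ if and only if $B$ admits a matching saturating every column.

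Next I would move back and forth between such a column-saturating matching and the covering object in condition~(ii). Given a matching $\mathcal M$ that saturates all columns, reinterpret each matched pair as a directed edge of $G_A$ and let $H$ be the resulting spanning subdigraph. In $H$, every state has out-degree exactly one (its column is matched to exactly one row-vertex), every state has in-degree at most one among states (each state-row is used at most once by $\mathcal M$), and every output has out-degree zero and in-degree at most one. Starting from any state and repeatedly following the unique outgoing edge yields a deterministic walk that either reaches an output and stops (a $Y$-topped path) or revisits a state; the in-degree bound forces the revisiting walk to close up into a cycle lying entirely in the state set, and prevents any two of these structures from sharing a vertex or forming a lollipop (a path feeding into a cycle). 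Since every state has out-degree one, every state lies on one such component, so $H$ is exactly a disjoint union of $Y$-topped paths and state-cycles covering all state vertices, which is condition~(ii). For the converse, given such a disjoint union, map each state $x_j$ to the row-vertex of its unique successor within the path or cycle containing it --- another state, or an output when $x_j$ is the last state of a $Y$-topped path. Vertex-disjointness of the components makes this map injective on the right class (a repeated image would give a vertex with two predecessors, or an output shared by two paths), so it is a matching of $B$ saturating all $n$ columns, whence $S\text{-rank}(M)=n$.

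The step I expect to be the main obstacle is the forward decomposition: one must argue carefully that, because $M$ is rectangular with more rows than columns, ``$S$-rank $=n$'' means a matching saturating the columns rather than the rows, and that the functional subdigraph $H$ --- in which each state points to exactly one successor while in-degrees never exceed one --- admits no configuration other than vertex-disjoint $Y$-topped paths and state-cycles. Once this combinatorial normal form for $H$ is nailed down, both directions follow quickly, and it also clarifies why the lemma captures exactly condition~(ii) (the spanning/covering requirement) and not condition~(i) (the accessibility requirement) of Theorem~\ref{th1}.
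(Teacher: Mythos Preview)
Your argument is correct. The bijection you set up between column-saturating matchings of the bipartite graph of $M=\left[\begin{smallmatrix}A\\ C\end{smallmatrix}\right]$ and spanning collections of vertex-disjoint $Y$-topped paths and state-cycles is exactly the right mechanism, and your handling of the ``lollipop'' obstruction via the in-degree bound is the key step that makes the forward direction go through cleanly.

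As for comparison with the paper: the paper does not actually prove Lemma~\ref{lemth1}. Immediately after stating it, the authors write that the proof ``is given in~\cite{rein_book,woude:03}'' and move on. So there is no in-paper argument to compare against; your proposal supplies a self-contained combinatorial proof where the paper simply defers to the structured-systems literature. What your approach buys is transparency --- it makes explicit why the $S$-rank condition captures precisely the spanning requirement (condition~(ii)) and not the accessibility requirement (condition~(i)), which the paper leaves implicit in the citation. The standard references establish the same equivalence by essentially the same matching/transversal reasoning, so your route is in line with how the cited sources argue, just unpacked in the present notation.
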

The proof of Theorem~\ref{th1} for generic controllability and Lemma~\ref{lemth1} is given in ~\cite{rein_book,woude:03}, where other equivalent graph-theoretic conditions to generic controllability (observability) are also defined that
we omit here. As an example, consider the system shown in Fig.~\ref{figAC}. It can be verified that each state is a begin-vertex of a~$Y$-topped path, and~$\{7\}$,~$\{4,5,6\}$,~$\{1,2,3,a\}$ constitute a disjoint union of cycles and $Y$-topped paths that cover all the state vertices in $X$. Thus, satisfying both conditions in Theorem~\ref{th1} and the system in Fig.~\ref{figAC} is observable for almost all choices of non-zero elements.

\subsection{Corresponding graphs}
In this paper, we deal with two different graph representations: \textit{system digraph},~$G_A$, representing states of the dynamic system ~\eqref{sys1} and~\eqref{sys2_ag}, and digraph~$G_W$ defining the agent\textit{communication network}. Let~$G_W = (V_W,E_W)$, where~$V_W=\{1,\ldots,N\}$ is the vertex set consisting of $N$ agents,
$E_W=\{(i,j)~|~i\leftarrow j\}$ is the edge set, and
$\mathcal{D}_i=\{i\}\cup\{j~|~(i,j)\in E\}$ denote the extended
neighborhood of agent~$i$. Notice that, unlike many works in the literature we do not constrain $G_W$ to be undirected. In fact, no assumption on the topology is considered here, as designing~$G_W$ is a contribution of this paper.

\emph{Example:} To shed more light on this, we give an example here. Consider the flocking motion example given in the Fig.~\ref{drones}. The position, velocity or acceleration of every quad-copters can be considered as a state of the system. having ground robots as agents, the coordination law among them, typically following dynamics \eqref{sys1} \cite{blondelflock:05}, defines the dynamical system, $A$, and system digraph, $G_A$. The system outputs/measurements are the states tracked by the ground robots, and the communication network (to be designed) among these ground-robots is $G_W$.

%%%%%%%%%%%%%%%%%%%%%%%%%%%%%%%%%%%%%%%%%%%%%%%%%%%%%%%%%%%%%%%%%%%%%%%%%%%%%%%%%%%%%%%%%%%%%%%%%%%%%%%%%%%%%%%
\section{Problem formulation}\label{pfff}
We employ a variant of the Networked Kalman-type Estimator (NKE) proposed in
\cite{usman_cdc:11, asilomar11}. Let~$\widehat{x}^i_{k|m}$ be the
state estimate of agent~$i$ at time~$k$ given the outputs up to time $m$, ($m\leq k$), from its neighboring agents,~$j\in\mathcal{D}_i$.
Each agent implements the following:
\begin{enumerate}[(i)]
\item \emph{Predictor and state fusion:}
\begin{eqnarray}\label{lp}
\widehat{\mb{x}}^i_{k|k-1} &=& \sum_{j\in\mathcal{D}_i} w_{ij}A\widehat{\mb{x}}^j_{k-1|k-1}
\end{eqnarray}
\item \emph{Estimator and output fusion:}
\begin{eqnarray}\label{le}
\widehat{\mb{x}}^i_{k|k} &=&\widehat{\mb{x}}^i_{k|k-1} + K_k^i \sum_{j\in D_i}C_j^T (y^j_k-C_j\widehat{\mb{x}}^i_{k|k-1})
\end{eqnarray}
\end{enumerate}
where~$W=\{w_{ij}\}$ is the \textit{state fusion} weight matrix
such that~$w_{ij}\geq 0$ with~$\sum_{j\in\mc{D}_i}w_{ij}=1$ ($W$
is stochastic), and~$K_k^i$ is the local estimator gain at agent $i$. 
\begin{rem} \label{wrank}
Following are some useful remarks: 
\begin{inparaenum}[(i)]
\item The diagonal entries of~$W$ are all nonzero, since every agent is in its own extended neighborhood and uses its own information.
\item The first equation~\eqref{lp} is a local prediction fusion where each agent $i$ fuses the neighboring estimates from time $k-1$ and then implements a local predictor.
\item In the second equation~\eqref{le}, each agent $i$ updates its local prediction with an \textit{innovation} term. We define this innovation as the difference between the state prediction of agent~$i$ and the state measurements obtained via agents,~$j\in\mc{D}_i$. Adding this term, agent, $i$, makes its final estimate, $\widehat{\mb{x}}^i_{k|k}$, for the current time step.
\item The protocol given in Eqs.~\eqref{lp}--\eqref{le} takes place at the same time-scale as the system dynamics, see Fig.~\ref{cps_ts_1}. Notice that both Eqs.~\eqref{lp}--\eqref{le} can be combined into one equation; we give separate equations for the ease of explanation.
\end{inparaenum}
\end{rem}

Let the estimation error at agent~$i$ and time~$k$ be defined as 
\begin{eqnarray}\label{love}
\mb{e}_{k}^i = \mb{x}_{k|k} - \widehat{\mb{x}}^i_{k|k},
\end{eqnarray}
and let~$\mb{e}_{k} = [(\mb{e}_{k}^1)^T,\ldots,(\mb{e}_{k}^N)^T]^T$ be the networked estimation error derived in the following.
\begin{prop}\label{ned_pr}
Let $\mathbf{q}_k^i$ be some function of the system and measurement noise, $\mathbf{v}_k$ and $\mathbf{r}_k^i$, independent of $\mathbf{e}_{k-1}$ and let 
\begin{eqnarray}\nonumber
K_k &=& \mbox{blockdiag}[K_k^1,\ldots,K_k^N],\\\nonumber D_C &=&
\left[
\begin{array}{ccc}
\sum_{j\in\mathcal{D}_1} C_j^TC_j&&0\\
&\ddots&\\
0&&\sum_{j\in\mathcal{D}_N} C_j^TC_j
\end{array}
\right],\\
\mb{q}_k &=& [(\mb{q}_k^1)^T, \hdots , (\mb{q}_k^N)^T]^T.
\end{eqnarray}
Then we get the following networked error dynamics,
\begin{eqnarray}\label{err1}
\mb{e}_{k} = (W\otimes A - K_kD_C(W\otimes A))\mb{e}_{k-1} +
\mb{q}_k.
\end{eqnarray}
\end{prop}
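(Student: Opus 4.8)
The plan is to derive Eq.~\eqref{err1} directly by substituting the protocol equations \eqref{lp}--\eqref{le} into the error definition \eqref{love} and collecting the terms that depend on $\mathbf{e}_{k-1}$. First I would write the error at agent $i$ and time $k$ by subtracting \eqref{le} from the true state $\mathbf{x}_{k}$ (using that $\mathbf{x}_k = A\mathbf{x}_{k-1}+\mathbf{v}_{k-1}$ and $\sum_{j\in\mathcal{D}_i}w_{ij}=1$, so that $\mathbf{x}_k = \sum_{j\in\mathcal{D}_i}w_{ij}A\mathbf{x}_{k-1}+\mathbf{v}_{k-1}$). Plugging in \eqref{lp} gives $\mathbf{x}_k - \widehat{\mathbf{x}}^i_{k|k-1} = \sum_{j\in\mathcal{D}_i}w_{ij}A(\mathbf{x}_{k-1}-\widehat{\mathbf{x}}^j_{k-1|k-1}) + \mathbf{v}_{k-1} = \sum_{j\in\mathcal{D}_i}w_{ij}A\,\mathbf{e}^j_{k-1} + \mathbf{v}_{k-1}$. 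Then from \eqref{le} the final error at agent $i$ is
\begin{eqnarray}\nonumber
\mathbf{e}_k^i = (\mathbf{x}_k - \widehat{\mathbf{x}}^i_{k|k-1}) - K_k^i\sum_{j\in\mathcal{D}_i}C_j^T\big(y_k^j - C_j\widehat{\mathbf{x}}^i_{k|k-1}\big),
\end{eqnarray}
and substituting $y_k^j = C_j\mathbf{x}_k + \mathbf{r}_k^j$ turns the innovation into $\sum_{j\in\mathcal{D}_i}C_j^T C_j(\mathbf{x}_k - \widehat{\mathbf{x}}^i_{k|k-1}) + \sum_{j\in\mathcal{D}_i}C_j^T\mathbf{r}_k^j = \big(\sum_{j\in\mathcal{D}_i}C_j^TC_j\big)(\mathbf{x}_k - \widehat{\mathbf{x}}^i_{k|k-1}) + \sum_{j\in\mathcal{D}_i}C_j^T\mathbf{r}_k^j$.

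Next I would stack these $N$ scalar-block identities into the global vector $\mathbf{e}_k = [(\mathbf{e}_k^1)^T,\ldots,(\mathbf{e}_k^N)^T]^T$. The key observation is that the map $\mathbf{e}_{k-1}\mapsto \big(\sum_{j\in\mathcal{D}_i}w_{ij}A\,\mathbf{e}^j_{k-1}\big)_{i=1}^N$ is exactly $(W\otimes A)\mathbf{e}_{k-1}$, by the definition of the Kronecker product and of the entries $w_{ij}$ (recall $w_{ij}=0$ whenever $j\notin\mathcal{D}_i$). Likewise, the block-diagonal collection of the operators $\sum_{j\in\mathcal{D}_i}C_j^TC_j$ is precisely $D_C$, and the block-diagonal collection of the gains $K_k^i$ is $K_k$. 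Therefore the stacked identity reads $\mathbf{e}_k = (W\otimes A)\mathbf{e}_{k-1} - K_k D_C\big((W\otimes A)\mathbf{e}_{k-1} + \mathbf{1}\otimes\mathbf{v}_{k-1}\big) + (\mathbf{1}\otimes\mathbf{v}_{k-1}) - K_k\,\widetilde{\mathbf{r}}_k$, where $\widetilde{\mathbf{r}}_k$ collects the $\sum_{j\in\mathcal{D}_i}C_j^T\mathbf{r}_k^j$ terms. Grouping all the noise-dependent pieces — which are functions of $\mathbf{v}_{k-1}$ and $\mathbf{r}_k^i$ only, hence independent of $\mathbf{e}_{k-1}$ — into $\mathbf{q}_k = (\mathbf{1}\otimes\mathbf{v}_{k-1}) - K_k D_C(\mathbf{1}\otimes\mathbf{v}_{k-1}) - K_k\widetilde{\mathbf{r}}_k$ gives
\begin{eqnarray}\nonumber
\mathbf{e}_k = \big(W\otimes A - K_k D_C(W\otimes A)\big)\mathbf{e}_{k-1} + \mathbf{q}_k,
\end{eqnarray}
which is \eqref{err1}, and identifies $\mathbf{q}_k^i$ as the $i$-th block of $\mathbf{q}_k$.

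The computation itself is routine bookkeeping; the only place requiring care is the bookkeeping of indices when passing from the per-agent sums over $\mathcal{D}_i$ to the global Kronecker-product form — in particular, making sure that the stochasticity of $W$ ($\sum_{j\in\mathcal{D}_i}w_{ij}=1$) is what allows $\mathbf{x}_k$ to be written as $\sum_{j\in\mathcal{D}_i}w_{ij}A\mathbf{x}_{k-1}+\mathbf{v}_{k-1}$ inside each agent's error, so that the true-state terms cancel and only the neighbors' errors $\mathbf{e}^j_{k-1}$ survive. I would also note explicitly that, because \eqref{le} uses agent $i$'s own prediction $\widehat{\mathbf{x}}^i_{k|k-1}$ inside the innovation for every $j\in\mathcal{D}_i$, the factor $\sum_{j\in\mathcal{D}_i}C_j^TC_j$ multiplies a single error vector rather than a $W$-weighted combination — this is precisely why $D_C$ appears as a left factor multiplying $(W\otimes A)$ and not as something entangled with $W$. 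Everything else follows by linearity.
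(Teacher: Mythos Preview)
Your derivation is correct and is exactly the ``straightforward manipulation'' the paper alludes to; the paper in fact omits the proof entirely, stating only that it requires some straightforward manipulations. Your use of the row-stochasticity of $W$ to write $\mathbf{x}_k=\sum_{j\in\mathcal{D}_i}w_{ij}A\mathbf{x}_{k-1}+\mathbf{v}_{k-1}$, followed by stacking and identifying the Kronecker and block-diagonal structures, is the natural route and there is nothing to add.
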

The derivation requires some straightforward manipulations and is omitted here. Comparing this to Eq.~\eqref{ge}, we note that the networked estimation error, $\mathbf{e}_k$, can be stabilized if and only if, the following pair,
\begin{eqnarray}\label{netO}
(W\otimes A, D_C),
\end{eqnarray}
is observable. In other words, a gain matrix, $K_k$, exists such that
$\rho(W\otimes A-K_kD_C(W\otimes A)) < 1$ (i.e., it is a Schur matrix), if and only if $(W\otimes A, D_C)$ is observable, where $\rho(\cdot)$ denotes the spectral radius of a matrix. As it can be seen from Eq.~\eqref{netO}, the communication network, $W$, plays a major role in distributed estimation as opposed to the multiple time-scale approach where $W$ is irrelevant. The role of $W$ in observability is because of the single time-scale nature of the estimator, see Fig.~\ref{cps_ts_1}.

\begin{rem}
~
\begin{itemize}
\item The variables $D_C$ and $K_k$ are block-diagonal matrices.
\item Every block diagonal, $\sum_{j\in\mathcal{D}_i} C_j^TC_j$, in the matrix~$D_C$, can be thought of as a representation of all of the measurements in the extended neighborhood of agent~$i$.
\end{itemize}
\end{rem}
We refer to $(W\otimes A, D_C)$ as the \textit{distributed system} and
$G_{W\otimes A}$ as the graph associated with the matrix~$W\otimes A$. For better understanding of the structural relevance of the estimator in~\eqref{lp}--\eqref{le}, we first consider $W=I$ and $\overline{D_C}$ defined as follows,
\begin{eqnarray}
\overline{D_C} &=& \left[
\begin{array}{ccc}
C_1^TC_1&&\\
&\ddots&\\
&&C_N^TC_N
\end{array}
\right],
\end{eqnarray}
implying no information exchange among the agents. This distributed system, $(I\otimes A, \overline{D_C})$, can be thought of as~$N$ subsystems each of them associated to an $n \times n$ block diagonal in the matrix~$W\otimes A$, see Fig.\ref{figmatrixW} (Left).
\begin{figure}
\centering
\subfigure{\includegraphics[width=3in]{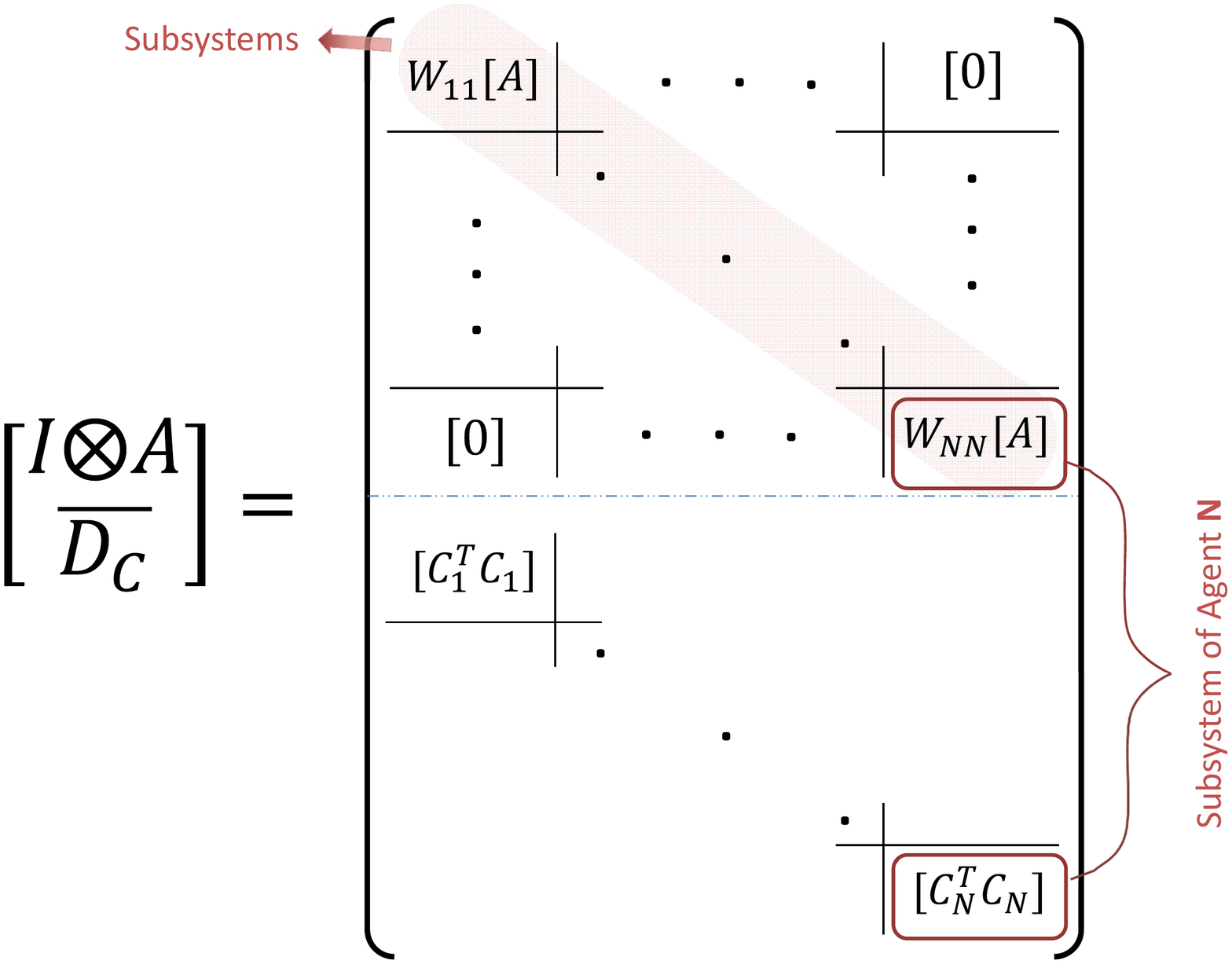}}
\subfigure{\includegraphics[width=3in]{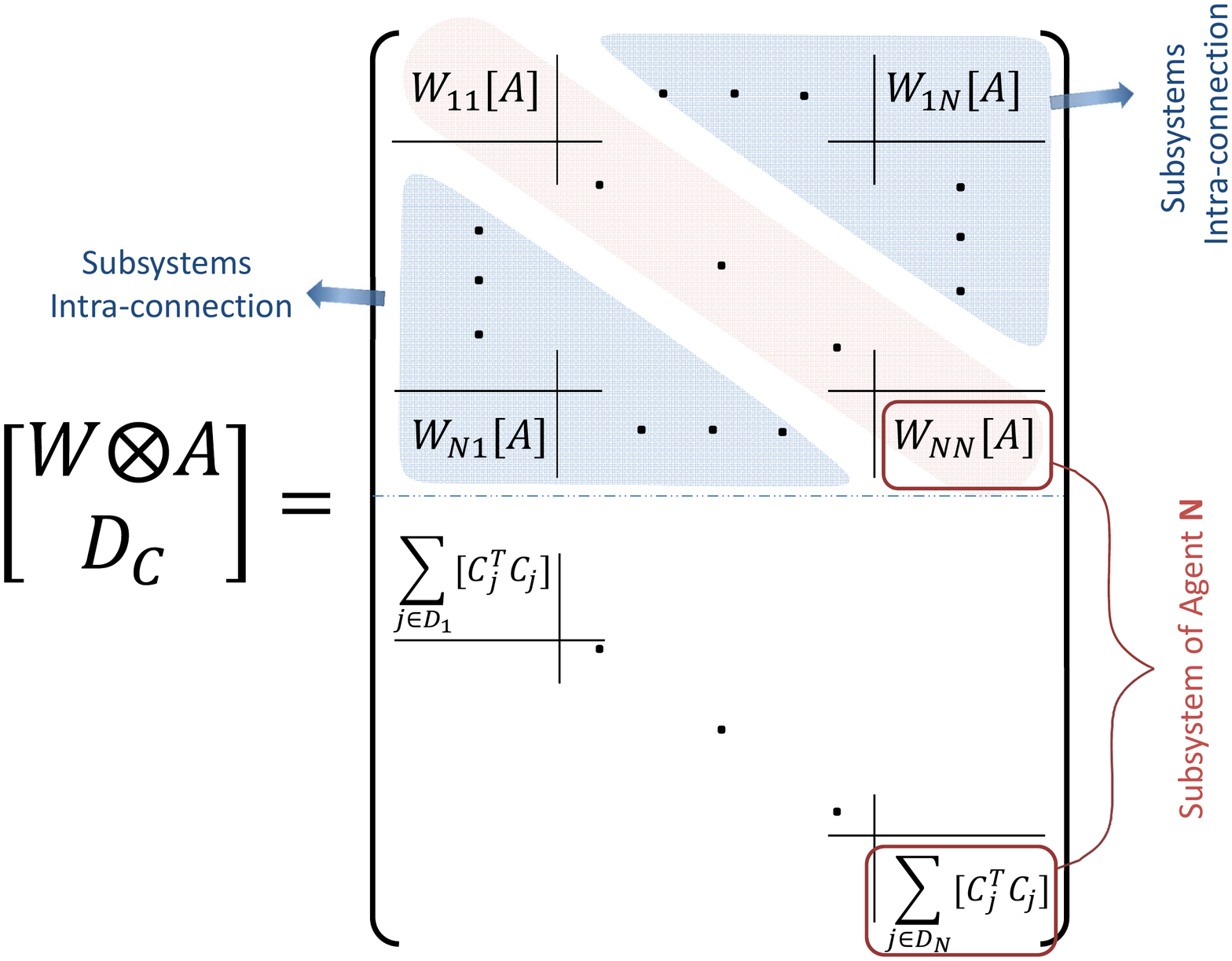}}
\caption{(Left) Matrix structure of the distributed system with no data fusion. Every block diagonal $W_{ii} \otimes A$ is a subsystem associated with the output/agent~$i$. (Right) Adding data fusion, the intra-connections among these subsystems depends on the non-diagonals $W_{ij} \otimes A$, $i \neq j$.}
\label{figmatrixW}
\end{figure}
Now consider $W$ to have some non-zero non-diagonal entries. As it is shown in Fig.\ref{figmatrixW} (Right), these entries define the \textit{inter-connections} among these subsystems.

To shed more light on this, consider the example given in Fig.~\ref{figAC} where we show a~$n=7$-state dynamical system with~$N=3$ agents,~$\{a,b,c\}$ such that agent~$a$ measures~$x_3$, agent~$b$ measures~$x_5$ and agent~$c$ measures~$x_7$. Each agent is \textit{required to} estimate the entire $n=7$ dimensional state-vector. Without any information fusion each agent only has a partial observation of the system as it is shown in Fig.~\ref{figA}.
\begin{figure*}
\centering
{\includegraphics[width=6 in]{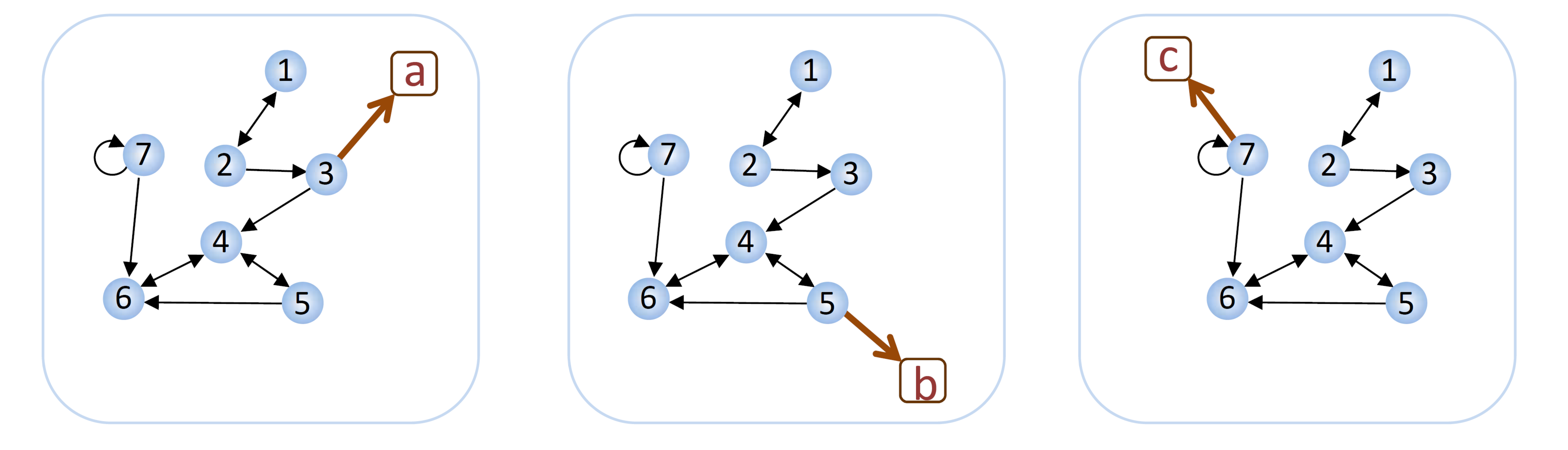}}
%\subfigure{\includegraphics[width=1in]{figA1.pdf}}
%\subfigure{\includegraphics[width=1in]{figA2.pdf}}
%\subfigure{\includegraphics[width=1in]{figA3.pdf}}
\caption{The graph associated with distributed system, $(I \otimes A,\overline{D_C})$, with no data fusion represented as a sub-system at each agent. According to the Theorem~\ref{th1}, each sub-system (agent) is not observable with no data fusion.}
\label{figA}
\end{figure*}
Therefore, each agent has to acquire the missing information via communicating with agents in its immediate neighborhood. However, in this illustration, no agent finds any observation in its neighborhood in addition to what it already possesses. Information sharing among the agents by applying state and output fusion provides more links among the subsystems in the distributed system digraph. This extra linking among subsystems and outputs, captured by the non-zeros in $W$ and the summation in $D_C$, has the \textit{potential} to improve the generic observability of the system. In this regard, the main objective is to define the \textit{structure} of the communication matrix~$W$ (graph~$G_W$) such that the distributed system~$(W\otimes A, D_C)$ is generically observable.

We seek a general method to make each subsystem observable. First, we describe how adding a link between two agents changes the graph structure of the distributed system. We explain this by considering the same example as in Fig.~\ref{figAC} and Fig.~\ref{figA}. In the case of output-fusion, a link between two agents, for example, from agent~$b$ to agent~$a$ ($a\leftarrow b$), implies that agent~$a$ has access to agent~$b$'s measurement, that is measurement of state~$x_5$. However, the state-fusion case is more involved. For example, adding a link from agent~$b$ to agent~$a$ implies a nonzero entry in~$W$, i.e., $w_{ab\neq0}$,  which reflects in the networked system matrix, $W\otimes A$, as adding edges \emph{to} some states in the subsystem associated to agent~$a$ \emph{from} some states in the subsystem associated to agent~$b$. This will be discussed in more detail in the next sections.

\section{Our approach}\label{as_cl}
We now enlist our assumptions and provide a novel agent classification that will help in establishing our results. 
\subsection{Assumptions}\label{as_cl_a}
In the rest of the paper, we make the following assumptions:
\begin{enumerate}[(i)]
\item The communication between the agents is stable, i.e., the the network is static;
\item For every agent, $i$, the pairs, $(A,C_i)$ or    $(A,\sum_{j\in\mathcal{D}_i}C_j^TC_j)$, are not necessarily observable;
\item The system is globally~$(A,C)$-observable, i.e., if we collect all the sensor measurements at a center then the dynamical system is observable.
\end{enumerate}

Assumption~(ii), in practice, makes the networked estimation problem more challenging and is where this work becomes significantly different from current approaches, see, for example,~\cite{flock} and references therein. Assumption (iii) is a typical assumption in distributed estimation implying the observability of centralized estimator; without this, no estimation scheme will work.

\subsection{State and Agent classification}\label{as_cl_cl}
To describe our approach, we provide a novel agent classification. Since the system is $(A,C)$-observable, (iii) in Section~\ref{as_cl_a}, we can enlist a disjoint set of cycles and $Y$-topped paths that covers all the state vertices (existence is ensured from condition (i)--Theorem~\ref{th1}). We are interested in a listing that involves the maximal cycles and we denote this set as $\mathcal{L}$. For example, from Fig.~\ref{figAC}, the disjoint set of cycles and $Y$-topped paths that covers all the state vertices includes $\{(4,6,4), (5,b), (1,2,3,a), (7,c)\}$, and $\{(4,5,6,4), (1,2,1), (7,7), (3,a)\}$, among others. However, the latter includes the maximal cycles and thus $\mathcal{L}=\{(4,5,6,4), (1,2,1), (7,7), (3,a)\}$. The following classification is with respect to~$\mathcal{L}$.
\begin{enumerate}[(i)]
\item \emph{Type-$\alpha$ agent} is an agent that appears in the $Y$-topped paths in $\mathcal{L}$. For example, agent~$a$ in Fig.~\ref{figAC}.
\item \emph{Type-$\beta$ agent} is an agent that measures a state in the parent cycles cycles in $\mathcal{L}$; a parent cycle is a cycle that does not have an outgoing link to any other state not belonging to itself. For example, agent~$b$ in Fig.~\ref{figAC}.
\item \emph{Type-$\gamma$ agent} is an agent that measure a state in the child cycles in $\mathcal{L}$; a child cycle is a cycle that is not a parent cycle. For example, agent~$c$ in Fig.~\ref{figAC}.
\end{enumerate}
The above agent classification leads to the following definition and lemma. 
\begin{defn}[Crucial observation] A crucial observation is a measurement such that removing it renders the dynamical system unobservable.
\end{defn}
\begin{lem}\label{cr_o}
The agents of Type-$\alpha$ and Type-$\beta$ make ``crucial" observations while the measurements at Type-$\gamma$ agents are not crucial.
\end{lem}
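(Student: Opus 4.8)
The plan is to argue directly from Theorem~\ref{th1} and its equivalent formulation in Lemma~\ref{lemth1}, using the structure of the maximal-cycle listing $\mathcal{L}$. Recall that $(A,C)$-observability requires two things: (i) every state is the begin-node of a $Y$-topped path, and (ii) there is a disjoint union of $Y$-topped paths and cycles covering $X$, equivalently $S\mbox{-rank}([A^T~C^T]^T)=n$. The listing $\mathcal{L}$ realizes condition (ii), and because its cycles are chosen maximal, the $Y$-topped paths in $\mathcal{L}$ are as short as possible, each contributing exactly one output vertex. I would first establish the Type-$\gamma$ (non-crucial) part, then the Type-$\alpha$ and Type-$\beta$ (crucial) part.

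First I would handle Type-$\gamma$ agents. Let $c$ be a Type-$\gamma$ agent, so $c$ measures a state $x$ lying in a \emph{child} cycle $C_\gamma\in\mathcal{L}$. Being a child cycle, $C_\gamma$ has an outgoing edge to some state vertex not in $C_\gamma$; following that edge and then a path toward the output set (which exists by condition (i) of Theorem~\ref{th1}), one reaches an output through a route that does not use the measurement at $c$. The key claim is that deleting the measurement at $c$ leaves both conditions of Theorem~\ref{th1} intact: condition (i) because every state in $C_\gamma$ can still reach $Y$ by first exiting $C_\gamma$ along its outgoing edge; condition (ii) because we can rebuild a disjoint cover — the cycle $C_\gamma$ itself still covers its own states without needing the output at $c$, and the remaining parts of $\mathcal{L}$ are untouched. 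Hence the system remains observable, so the observation at $c$ is not crucial.

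Next I would handle Type-$\alpha$ and Type-$\beta$ agents, the crucial ones. For a Type-$\alpha$ agent $a$: it appears as the terminal output of a $Y$-topped path $P\in\mathcal{L}$. The begin-node $x$ of $P$ is a state that is \emph{not} covered by any cycle of $\mathcal{L}$ (otherwise maximality of the cycles would have absorbed it, or there would be a shorter valid listing). Deleting the measurement at $a$ removes the only output that terminates $P$; I must show no alternative disjoint cover exists, i.e., that $S\mbox{-rank}$ of the stacked matrix drops below $n$, equivalently some state can no longer be matched. The argument is that $x$ had to be matched to an output (it sits in no cycle), and $a$'s measurement was the resource doing so; removing it leaves $x$ unmatchable within the reduced structure, violating Lemma~\ref{lemth1}. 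For a Type-$\beta$ agent $b$: it measures a state $x$ in a \emph{parent} cycle $C_\beta\in\mathcal{L}$, a cycle with no outgoing edge to any state outside it. Then no state of $C_\beta$ can reach $Y$ except through the measurement at $b$ (or another measurement on $C_\beta$). I would argue that $b$'s measurement is the unique such exit in the listing, so deleting it breaks condition (i) of Theorem~\ref{th1} for the states of $C_\beta$, and the system becomes unobservable.

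The main obstacle I anticipate is the uniqueness/counting step in the crucial case: one must rule out the possibility that after deleting a Type-$\alpha$ or Type-$\beta$ measurement, some \emph{other} listing of cycles and $Y$-topped paths (not derived from $\mathcal{L}$) could still cover all states, or some other route could $Y$-top a parent cycle. This requires using the maximality of the cycles in $\mathcal{L}$ in an essential way — the point is that a parent cycle, being a parent SCC-like object with no state-outgoing edges, genuinely has no other escape route to $Y$, and a $Y$-topped path of minimal length attached to a cycle-free state carries an output that cannot be substituted. I would make this precise by a matching/flow argument on the bipartite graph underlying $S\mbox{-rank}([A^T~C^T]^T)$, showing that the deleted column (measurement) is in every maximum matching that saturates the offending state, hence its removal forces the $S$-rank below $n$ and, by Lemma~\ref{lemth1} together with condition (i) of Theorem~\ref{th1}, destroys observability.
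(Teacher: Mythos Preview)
Your approach is the same as the paper's: both argue case-by-case from Theorem~\ref{th1}, using that a parent cycle has no outgoing state-edges (so removing the Type-$\beta$ measurement breaks condition~(i)), that the state on a $Y$-topped path in $\mathcal{L}$ loses its cover when the Type-$\alpha$ output is deleted (so condition~(ii) fails), and that a child cycle both covers its own states and has an alternate exit to $Y$ (so Type-$\gamma$ is redundant). The paper treats Type-$\beta$ first, then Type-$\alpha$, then Type-$\gamma$, but the substance is identical to what you wrote.

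Where you differ is in rigor, and in your favor. The paper's Type-$\alpha$ argument says only that ``the attached state vertex is not included in $\mathcal{L}$ anymore,'' which --- exactly as you flag in your obstacle paragraph --- shows merely that the \emph{particular} listing $\mathcal{L}$ fails, not that every disjoint cover fails. The paper does not supply the matching argument you outline; it simply asserts the conclusion. Similarly, your parenthetical ``(or another measurement on $C_\beta$)'' identifies a real subtlety the paper ignores: if two agents observe the same parent cycle, neither is individually crucial, so the lemma tacitly assumes a minimality of the measurement set that is nowhere stated. Your proposed bipartite-matching fix for the Type-$\alpha$ case is the right tool, but be aware it will also need that minimality hypothesis (or the stronger structural fact that the relevant state lies in no cycle of $G_A$ at all) to actually force the $S$-rank below $n$; maximality of the cycles in $\mathcal{L}$ alone does not guarantee this.
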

\begin{proof}
Since the Type-$\beta$ agents monitor the parent cycles and there is no outgoing link from a parent cycle to any other state outside this cycle, the states in the parent cycles can only be the begin vertices of a $Y$-topped path (in order to satisfy condition (ii) in Theorem~\ref{th1}) when any one of these states is connected to an output. Hence all the Type-$\beta$ agents make crucial observations. On the other hand, removing a Type-$\alpha$ agent violates the condition (i) in Theorem~\ref{th1} as the attached state vertex is not included in $\mathcal{L}$ anymore. Hence, Type-$\alpha$ agents are also crucial. Finally, the only location for the Type-$\gamma$ agent is monitoring a child SCC, which either has a directed path to a Type-$\alpha$ agent or to a Type-$\beta$ agent and hence is redundant. While the contribution of this state remains in $\mathcal{L}$ due to the cycle present there.
\end{proof}

For example, in Fig.~\ref{figAC}, if either agent $a$ or agent $b$ is removed, then the system becomes unobservable. It can also be verified that agent $c$ is non-crucial. Having defined types of agents, we note that the observability of the distributed system can be recovered via either~$W\otimes A$ matrix (state-fusion) or~$D_C$ (output-fusion). Here, we provide the \textit{minimal}
sufficient number of communication among the agents. Unlike our
previous works~\cite{usman_cdc:11, asilomar11}, we do not impose
any constraint on the system matrix,~$A$. Furthermore, the generic approach is further robust to uncertain systems and to \textit{linearized} approximation of nonlinear models where the structure is fixed while the values are a function of the operating point \cite{liu-nature}.

%%%%%%%%%%%%%%%%%%%%%%%%%%%%%%%%%%%%%%%%%%%%%%%%%%%%%%%%%%%%%%%%%%%%%%%%%%%%%%%%%%%%%%%%%%%%%%%%%%%%%%%%%%%%%%
\section{Recovering observability}\label{main}
In this section, we first present some helpful results for the development of the paper and then find a general solution for $(W \otimes A, D_C)$~observability. We first discuss the role of state fusion, related to the structure of matrix~$W$, and then the role of output fusion, related to the structure of matrix~$D_C$ (see Table~\ref{tab}). The reason is to get more intuitive and separate solutions for state and output fusion; obviously, in real applications if two agents are linked together they nay share all of their information, including both their measurement and state estimates, to maximally improve their current state estimates. The results and proofs in this section are mainly graph theoretic that is a direct consequence of using the generic approach.

\begin{table}
\centering
\caption{This table shows distributed system for different fusion levels.}
\begin{tabular}{l*{2}{c}r}
Fusion level      & equivalent distributed system \\
\hline
No data fusion &  $(I \otimes A , \overline{D_C})$ \\
Only state fusion    & $(W \otimes A , \overline{D_C})$  \\
Only output fusion &  $(I \otimes A , D_C)$ \\
Both measurement and state fusion &  $(W \otimes A , D_C)$ \\
\end{tabular}

\label{tab}
\end{table}

\subsection{Results on rank genericity}
The result below follows from Theorem~\ref{th1} and Lemma~\ref{lemth1} as provided in Section~\ref{pre}.
\begin{cor}[Full~$S$-rank]\label{fullsrank}
A system matrix,~$A$, is full~$S$-rank if and only if its associated digraph has a disjoint union of cycles covering all the state vertices.
\end{cor}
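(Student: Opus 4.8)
The plan is to reduce the statement to the matching characterization of the $S$-rank recalled just after Definition~\ref{s-rank}: the $S$-rank of $A$ equals the size of a maximum matching in the bipartite graph $B(A)$ associated with $A$ (columns on one side, rows on the other, an edge between column $j$ and row $i$ precisely when $a_{ij}\neq 0$) and, equivalently, the maximum number of nonzero entries of $A$ lying in pairwise distinct rows and columns. Hence $A$ has full $S$-rank $n$ if and only if $B(A)$ has a perfect matching, equivalently if and only if there is a permutation $\pi$ of $\{1,\dots,n\}$ with $a_{\pi(j),j}\neq 0$ for every $j$. The crux of the proof is the elementary observation that such a permutation, written in terms of its disjoint orbits, is literally the same object as a vertex-disjoint family of directed cycles of $G_A$ covering all of $X$; no further machinery is needed, only a careful translation of definitions. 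As a cross-check one may also read the corollary off Lemma~\ref{lemth1}: in the degenerate case where there are no output vertices, a disjoint union of $Y$-topped paths and cycles covering $X$ can contain no $Y$-topped path and so reduces to a disjoint cycle cover, and then Lemma~\ref{lemth1} gives exactly the equivalence with $S\mbox{-rank}(A)=n$.

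For the forward implication, I would start from a perfect matching of $B(A)$ and record it as a permutation $\pi$ with $a_{\pi(j),j}\neq 0$ for all $j$. By the definition $E_A=\{(x_j,x_i)\mid a_{ij}\neq 0\}$, each such nonzero entry is a directed edge $x_j\rightarrow x_{\pi(j)}$ of $G_A$. I then decompose $\pi$ into its disjoint orbits $(j_1\; j_2\;\cdots\; j_\ell)$ and read off, for each orbit, the closed walk $x_{j_1}\rightarrow x_{j_2}\rightarrow\cdots\rightarrow x_{j_\ell}\rightarrow x_{j_1}$, which is a directed cycle of $G_A$ on the vertex set $\{x_{j_1},\dots,x_{j_\ell}\}$ (reducing to the self-loop $x_{j_1}\rightarrow x_{j_1}$ when $\ell=1$, which the paper admits as a cycle of length one). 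Since the orbits of $\pi$ partition $\{1,\dots,n\}$, the resulting cycles are pairwise vertex-disjoint and together cover every state vertex, which is the desired cycle cover.

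For the converse, I would take a vertex-disjoint family of directed cycles covering $X$ and define $\pi(j)$ to be the index of the unique successor of $x_j$ along the (unique) cycle through $x_j$; then $a_{\pi(j),j}\neq 0$ by the definition of $E_A$. Because the cycles are vertex-disjoint and exhaust $X$, every vertex has exactly one successor and exactly one predecessor inside its own cycle, so $\pi$ is a bijection of $\{1,\dots,n\}$; hence $\{(\pi(j),j):j=1,\dots,n\}$ consists of $n$ nonzero entries of $A$ in pairwise distinct rows and columns, i.e.\ a perfect matching of $B(A)$, whence $S\mbox{-rank}(A)=n$. The only point where one must be a little careful — and essentially the only nonroutine part — is the bookkeeping around length-one cycles: a fixed point of $\pi$ has to be matched to a nonzero diagonal entry $a_{jj}$, i.e.\ a self-loop at $x_j$, and conversely, so that the correspondence between disjoint cycle covers and permutations is genuinely a bijection; everything else is a direct rewriting of the definitions recalled in Section~\ref{pre}.
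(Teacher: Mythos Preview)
Your argument is correct. The paper itself does not spell out a proof of this corollary beyond the one-line remark that it ``follows from Theorem~\ref{th1} and Lemma~\ref{lemth1}'', i.e.\ exactly the degenerate-output cross-check you mention at the end of your first paragraph: with no output vertices, condition~(ii) of Theorem~\ref{th1} reduces to a disjoint cycle cover of $X$, and Lemma~\ref{lemth1} then identifies this with $S\mbox{-rank}(A)=n$.

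Your main argument takes a different and more self-contained route: rather than invoking Lemma~\ref{lemth1} (whose proof is deferred to \cite{rein_book,woude:03}), you work directly from the bipartite-matching description of $S$-rank given after Definition~\ref{s-rank} and exhibit the explicit bijection between perfect matchings of $B(A)$ and vertex-disjoint cycle covers of $G_A$ via the cycle decomposition of the permutation $\pi$. This is more elementary and makes the corollary independent of the imported results, at the cost of a few lines of bookkeeping; the paper's one-line reduction is shorter but relies on a black-box equivalence. Either is acceptable here, and your handling of the length-one/self-loop case is the right point to flag.
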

Notice that non-zero diagonals of a matrix can be represented as a disjoint union of self-cycles in its associated digraph. From Corollary~\ref{fullsrank} and by Remark~\ref{wrank} (Section~\ref{pfff}) we obtain the following result.
\begin{cor}\label{rankW}
The communication matrix,~$W$, has a disjoint union of self-cycles and,
\begin{eqnarray}
S\mbox{-rank}(W)=N.
\end{eqnarray}
\end{cor}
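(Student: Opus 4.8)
The plan is to apply Corollary~\ref{fullsrank} to the communication matrix $W$ itself, treating $G_W$ as the relevant digraph in place of the system digraph $G_A$. First I would record the one structural fact needed about $W$: by Remark~\ref{wrank}(i), every diagonal entry $w_{ii}$ is nonzero, since agent $i$ always lies in its own extended neighborhood $\mathcal{D}_i$ and, together with $w_{ij}\ge 0$ and $\sum_{j\in\mathcal{D}_i} w_{ij}=1$, this forces $w_{ii}>0$. Translated to $G_W$, this says that each of the $N$ agent vertices carries a self-loop $(i,i)\in E_W$.

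Next I would observe that the $N$ self-loops $\{(1,1),\ldots,(N,N)\}$ form a disjoint union of cycles — each self-loop being a cycle of length one, and any two of them sharing no vertex — that covers every vertex of $G_W$. This is exactly the hypothesis of Corollary~\ref{fullsrank}. Invoking that corollary with $W$ in the role of $A$ then gives at once that $W$ is full $S$-rank, i.e. $S\mbox{-rank}(W)=N$; and the disjoint union of self-cycles asserted in the statement is precisely the one just exhibited.

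There is no real obstacle here; the statement is essentially a specialization of Corollary~\ref{fullsrank}, so the ``main step'' is simply recognizing that $W$ can be fed into that corollary. The only point worth making explicit is that the conclusion is a genuinely generic (structural) statement and does not rely on the stochasticity of $W$: since each $w_{ii}$ is a free nonzero parameter on the diagonal, the bipartite graph of $W$ admits the perfect matching pairing row $i$ with column $i$, so its maximum matching has cardinality $N$ for almost all choices of the nonzero entries, independently of the off-diagonal sparsity pattern. Hence $S\mbox{-rank}(W)=N$ also follows directly from Definition~\ref{s-rank}, which could serve as an equivalent one-line argument.
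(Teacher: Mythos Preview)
Your proposal is correct and follows essentially the same approach as the paper: the paper derives Corollary~\ref{rankW} directly from Corollary~\ref{fullsrank} together with Remark~\ref{wrank}(i), noting that the nonzero diagonals of $W$ yield a disjoint union of self-cycles covering all $N$ vertices. Your write-up is simply a more detailed rendering of this same argument, with the additional (and correct) observation that the matching interpretation of Definition~\ref{s-rank} gives an equivalent one-line justification.
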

This is always true because $W$ has non-zero diagonals, i.e., $w_{ii}\neq0,~\forall i$. Consequently, we state the following lemma for the networked system~$(W \otimes A)$.
\begin{lem}\label{rankW*A}
For the communication matrix,~$W_{N \times N}$, and system matrix,
$A_{n \times n}$, the networked system~$W\otimes A$ is
structurally full-rank if and only if~$A$ is structurally
full-rank. Mathematically,
\begin{eqnarray}
S\mbox{-rank} (W\otimes A)=N \times n \iff S\mbox{-rank}(A)=n
\end{eqnarray}
\end{lem}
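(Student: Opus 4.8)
The plan is to use the characterization of structural full rank via disjoint unions of cycles (Corollary~\ref{fullsrank}) together with the combinatorial description of the digraph $G_{W\otimes A}$ induced by the Kronecker product. First I would recall that, by Corollary~\ref{fullsrank}, $S\mbox{-rank}(A)=n$ if and only if $G_A$ admits a disjoint union of cycles covering all $n$ state vertices, and likewise $S\mbox{-rank}(W\otimes A)=Nn$ if and only if $G_{W\otimes A}$ admits a disjoint union of cycles covering all $Nn$ vertices. So the entire statement reduces to showing that $G_{W\otimes A}$ has such a spanning cycle cover if and only if $G_A$ does. Throughout I would use that $W$ has nonzero diagonal (Remark~\ref{wrank}, Corollary~\ref{rankW}), i.e. $w_{ii}\neq 0$ for all $i$.

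For the forward direction ($\Leftarrow$), suppose $G_A$ has a disjoint cycle cover $\mathcal{C}$ of its $n$ vertices. The vertices of $G_{W\otimes A}$ are naturally indexed by pairs $(i,v)$ with $i\in\{1,\dots,N\}$ a block (agent) index and $v\in\{1,\dots,n\}$ a state index, and an edge $(i,v)\to(j,u)$ exists in $G_{W\otimes A}$ exactly when $w_{ji}\neq 0$ and $a_{uv}\neq 0$ (i.e. $(v,u)\in E_A$). The key observation is that for a fixed block $i$, since $w_{ii}\neq 0$, the induced subgraph of $G_{W\otimes A}$ on the vertex set $\{(i,v): v=1,\dots,n\}$ contains an isomorphic copy of $G_A$. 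Hence I can take, in each of the $N$ blocks, the copy of the cycle cover $\mathcal{C}$; these $N$ copies are vertex-disjoint (they live in different blocks) and together cover all $Nn$ vertices of $G_{W\otimes A}$. By Corollary~\ref{fullsrank}, $S\mbox{-rank}(W\otimes A)=Nn$.

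For the reverse direction ($\Rightarrow$), I argue by contraposition (or directly via a rank inequality): suppose $S\mbox{-rank}(A)=r<n$. The clean algebraic way is to invoke the standard Kronecker fact that $\mbox{rank}(W\otimes A)=\mbox{rank}(W)\,\mbox{rank}(A)$ for any admissible numerical realization, and to note that the generic rank is attained on an open dense set, so $S\mbox{-rank}(W\otimes A)=S\mbox{-rank}(W)\cdot S\mbox{-rank}(A)$; combined with $S\mbox{-rank}(W)=N$ (Corollary~\ref{rankW}) this gives $S\mbox{-rank}(W\otimes A)=Nr<Nn$. Alternatively, to stay purely graph-theoretic and consistent with the paper's style, I would show that any disjoint cycle cover of $G_{W\otimes A}$ projects, block by block, onto a collection of vertex-disjoint cycles in $G_A$: a spanning cycle cover of $G_{W\otimes A}$ corresponds to a permutation $\pi$ of the $Nn$ vertices with every $(i,v)\to\pi(i,v)$ an edge, hence decomposing into $N$ "column sums" shows each state vertex of $G_A$ is covered, contradicting $r<n$ by a counting/matching argument on the bipartite graph of $A$.

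The main obstacle is the reverse direction done combinatorially: a spanning cycle cover of $G_{W\otimes A}$ need not respect the block structure (cycles may wander between blocks because of nonzero off-diagonal $w_{ij}$), so one cannot simply "read off" a cover of $G_A$ from a single block. The cleanest resolution is to fall back on the algebraic identity $\mbox{rank}(W\otimes A)=\mbox{rank}(W)\mbox{rank}(A)$ applied to a generic realization, which sidesteps the block-mixing issue entirely; if a self-contained graph-theoretic proof is preferred, one must instead use the bipartite-matching characterization of $S$-rank and show that a perfect matching in the bipartite graph of $W\otimes A$ forces, via the product structure of its biadjacency matrix, a perfect matching in the bipartite graph of $A$ (together with one in that of $W$, which exists trivially).
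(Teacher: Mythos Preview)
Your proposal is correct. The reverse direction (your primary, algebraic argument via $\mbox{rank}(W\otimes A)=\mbox{rank}(W)\,\mbox{rank}(A)$ together with $S\mbox{-rank}(W)=N$) is exactly what the paper does for \emph{both} directions: the paper never passes through Corollary~\ref{fullsrank} or cycle covers, but simply bounds $\max\mbox{rank}(W\otimes A)$ above and below using the Kronecker rank identity and Corollary~\ref{rankW}.

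Where you genuinely differ is the forward direction ($\Leftarrow$). Instead of the rank identity, you give a constructive graph-theoretic argument: because $w_{ii}\neq 0$, each diagonal block of $G_{W\otimes A}$ contains a copy of $G_A$, so a spanning cycle cover of $G_A$ can be replicated in each of the $N$ blocks to yield a spanning cycle cover of $G_{W\otimes A}$. This is a legitimate alternative that stays entirely within the combinatorial framework of Corollary~\ref{fullsrank} and makes explicit use of the diagonal structure of $W$ rather than only its generic rank; the paper's algebraic route is shorter but less structural. You are also right that the purely graph-theoretic version of the reverse direction is awkward because cycles in $G_{W\otimes A}$ can mix blocks; the paper avoids this issue by not attempting a combinatorial argument at all, and your fallback to the rank identity is the appropriate resolution.
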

\begin{proof}
Recall that for two matrices,~$W$ and~$A$,
\begin{equation}
rank(W\otimes A)=rank(W)\times rank(A)
\end{equation}
From Corollary~\ref{rankW}, we have~$rank(W)=N$ for almost all numerical values, and for any full
rank matrix~$A_{n\times n}$, we have
\begin{equation}
\max(rank(W\otimes A))=N\times n,
\end{equation}
Based on the definition of the $S$-rank, we can conclude that~$(W \otimes A)$ is generically full rank for almost all choices of numerical values. This proves the
necessity. On the other hand, if~$rank(A)<n$ for any choice of~$W$, then we have
\begin{equation}
\max(rank(W\otimes A))<N\times n,
\end{equation}
which implies that,
\begin{equation}
S\mbox{-rank}(W\otimes A)< N\times n.
\end{equation}
This proves the sufficiency.
\end{proof}

\subsection{State fusion}
We now explore Eq.~\eqref{lp} in NKE protocol and assume that there is no output fusion. In particular, we analyze the structure of the matrix~$W$ for $(W \otimes A , \overline{D_C})$~observability according to Table~\ref{tab}. First, we provide some special cases where the system matrix,~$A$, is structurally full rank. This is the case, for example, in linearization and discretization of non-linear systems where the system matrix
\emph{almost always} has non-zero diagonal entries.

\begin{lem} \label{placement}
For full~$S$-rank system matrix, $(A,C)$ is centrally observable if and only if every parent~SCC is output connected, i.e., monitored by (at least) one agent.
\end{lem}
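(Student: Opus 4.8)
The plan is to invoke Theorem~\ref{th1} and reduce its two generic-observability conditions to a single graph condition. Since $A$ has full $S$-rank, Corollary~\ref{fullsrank} guarantees that $G_A$ admits a disjoint union of cycles covering all state vertices; taking this family (with no $Y$-topped paths) already witnesses condition~(ii) of Theorem~\ref{th1}. Equivalently, by Lemma~\ref{lemth1}, $S\mbox{-rank}(A)=n$ forces $S\mbox{-rank}([A;C])=n$. Hence, under the full $S$-rank hypothesis, $(A,C)$ is generically observable if and only if condition~(i) holds, i.e., every state vertex is the begin-node of a $Y$-topped path. So the whole lemma reduces to showing: condition~(i) is equivalent to ``every parent SCC is output-connected.''

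For the ``only if'' direction, I would fix a parent SCC $\mathcal{S}$ and an arbitrary state $x\in\mathcal{S}$, and take a $Y$-topped path $x=v_1\to v_2\to\cdots\to v_\ell=y$ guaranteed by condition~(i). All vertices $v_1,\dots,v_{\ell-1}$ are state vertices (only the end-vertex may lie in $Y$, since output vertices have no outgoing edges). By Definition~\ref{parent}, $\mathcal{S}$ has no outgoing edge to a state outside itself, so an induction along the path shows $v_1,\dots,v_{\ell-1}\in\mathcal{S}$; in particular the last edge $v_{\ell-1}\to y$ goes from a state of $\mathcal{S}$ to an output, so $\mathcal{S}$ is output-connected.

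For the ``if'' direction, consider the condensation DAG of $G_A$ restricted to state vertices; its sinks are exactly the parent SCCs (Definition~\ref{parent}). Given any state $x$, follow a directed path in this DAG starting from the SCC of $x$; since the DAG is finite and acyclic, it terminates at a sink, i.e., a parent SCC $\mathcal{S}$, so there is a directed path in $G_A$ from $x$ to some $x'\in\mathcal{S}$. By hypothesis some $x''\in\mathcal{S}$ has an edge to an output $y$, and since $\mathcal{S}$ is strongly connected there is a path from $x'$ to $x''$ inside $\mathcal{S}$. Concatenating these and extracting a simple path yields a $Y$-topped path beginning at $x$, which establishes condition~(i).

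I expect the main obstacle to be the bookkeeping in the ``only if'' direction --- making precise that a $Y$-topped path emanating from a parent SCC cannot leave that SCC except through an output edge --- together with being careful that the notion of ``parent SCC'' (including trivial one-vertex SCCs) coincides with ``sink of the state-condensation DAG,'' which is exactly what makes the termination argument in the ``if'' direction go through.
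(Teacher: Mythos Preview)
Your proposal is correct. The paper itself does not prove this lemma: it declares the proof ``straightforward and omitted here'' and points to~\cite{asilomar11}. Your argument---using Corollary~\ref{fullsrank} to discharge condition~(ii) of Theorem~\ref{th1} via a cycle cover, and then showing condition~(i) is equivalent to output-connectivity of every parent SCC through the sink structure of the state-condensation DAG---is exactly the natural route, and it matches the reasoning the paper deploys later in the proof of Theorem~\ref{fullA,W}, which essentially re-derives and applies this lemma. (A minor caution: in that later proof the paper appears to swap the labels ``condition~(i)'' and ``condition~(ii)'' relative to the statement of Theorem~\ref{th1}; your labeling is the one consistent with Theorem~\ref{th1} as stated.) The bookkeeping you flag in the ``only if'' direction is handled correctly by your induction along the $Y$-topped path, and the ``if'' direction is standard once one observes that parent SCCs are precisely the sinks of the condensation.
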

The proof is straightforward and omitted here. Interested readers may see our previous work in \cite{asilomar11}. The following theorem establishes conditions on the communication
network,~$G_W$, over full~$S$-rank systems.
\begin{theorem} \label{fullA,W}
With a full~$S$-rank system,~$A$, the pair $(W \otimes A, \overline{D_C})$ is generically observable when for every parent-SCC in~$A$, say~$\mathcal{K}$, if agent~$i$ does not have an observation of a state in $\mathcal{K}$, then in the communication network,~$G_W$, there must be a directed path from agent~$i$ to any agent~$j$, which has a state observation in\footnote{If there is more than one agent observing SCC~$\mathcal{K}$, say agents~$j,k$, a directed path from agent $i$ to only one of them is sufficient.} $\mathcal{K}$.
\end{theorem}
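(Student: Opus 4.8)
The plan is to reduce the generic observability of $(W \otimes A, \overline{D_C})$ to the two graph-theoretic conditions of Theorem~\ref{th1} applied to the digraph $G_{W \otimes A}$, and then verify each condition using the structure of $A$ and the hypothesis on $G_W$. Recall from Lemma~\ref{rankW*A} that since $A$ has full $S$-rank, $W \otimes A$ also has full $S$-rank, and by Corollary~\ref{fullsrank} this means $G_{W \otimes A}$ already contains a disjoint union of cycles covering all of its $N \times n$ state vertices. By Lemma~\ref{lemth1}, this is precisely condition (ii) of Theorem~\ref{th1} for the distributed system. So the entire burden of the proof is to establish condition (i): \emph{every} state vertex in $G_{W \otimes A}$ is the begin-node of a $Y$-topped path, where the outputs are those encoded in $\overline{D_C}$, namely agent $i$'s own measurements sitting on its own copy of the state space.

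First I would fix notation for the vertices of $G_{W \otimes A}$: write $x_\ell^{(i)}$ for the copy of state $x_\ell$ living in the $i$-th diagonal block (the subsystem of agent $i$), so there are $Nn$ such vertices. The edges are of two kinds: intra-block edges $x_\ell^{(i)} \to x_m^{(i)}$ whenever $a_{m\ell} \neq 0$ (since $w_{ii} \neq 0$ always, Remark~\ref{wrank}), and inter-block edges $x_\ell^{(j)} \to x_m^{(i)}$ whenever $w_{ij} \neq 0$ and $a_{m\ell}\neq 0$. Now take an arbitrary state vertex $x_\ell^{(i)}$. Within agent $i$'s own subsystem — a copy of $G_A$ augmented with agent $i$'s outputs — there is a path from $x_\ell$ to some parent SCC of $A$, call it $\mathcal{K}$ (this follows because in $G_A$ every state has a path into a parent SCC, else one could not cover all states with $Y$-topped paths and cycles). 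Lifting this path to the $i$-th block gives a path from $x_\ell^{(i)}$ into the vertices $\{x_m^{(i)} : x_m \in \mathcal{K}\}$. If agent $i$ itself observes a state of $\mathcal{K}$, we are immediately done: extend along the SCC (strongly connected, so reachable) to that observed state and exit to the output. If not, the hypothesis of the theorem gives a directed path $i \to \cdots \to j$ in $G_W$ with agent $j$ observing some state $x_q \in \mathcal{K}$. I would then translate this $G_W$-path into a path in $G_{W \otimes A}$: each edge $(i',j')$ on the $G_W$-path, meaning $w_{i'j'} \neq 0$, together with any nonzero entry of $A$ — in particular the entries internal to $\mathcal{K}$, which exist since $\mathcal{K}$ is a (nontrivial or self-) cycle — produces inter-block edges that let us walk from the $\mathcal{K}$-vertices of block $i'$ to the $\mathcal{K}$-vertices of block $j'$ (reversing the $G_W$-orientation convention as in $E_W = \{(i,j) \mid i \leftarrow j\}$). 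Chaining these, we reach some $x_r^{(j)}$ with $x_r \in \mathcal{K}$, then move inside the SCC copy in block $j$ to $x_q^{(j)}$, and finally exit via the output edge $x_q^{(j)} \to y_q^{(j)}$ present in $\overline{D_C}$. This exhibits the required $Y$-topped path from $x_\ell^{(i)}$.

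The step I expect to be the main obstacle is making the lifting of the $G_W$-path to $G_{W \otimes A}$ fully rigorous, in particular ensuring the walk through successive blocks genuinely stays on state vertices belonging to the parent SCC $\mathcal{K}$ at every hop and that the concatenation of (intra-block SCC traversal) with (inter-block hop) with (intra-block SCC traversal) can be trimmed to a \emph{simple} path — Theorem~\ref{th1} as stated uses simple paths. This requires checking that an SCC copy of $\mathcal{K}$ in each block contains, for any pair of its vertices, a path between them, which holds because $\mathcal{K}$ is strongly connected in $G_A$ and that property is preserved in each diagonal block; and then invoking the standard fact that any walk from $u$ to $v$ contains a simple path from $u$ to $v$. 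A secondary subtlety is the degenerate case where the parent SCC is a single self-loop state $x_q$ (so $a_{qq}\neq 0$): here the "traversal inside $\mathcal{K}$" is trivial, and one must confirm the self-loop entry $a_{qq} \neq 0$ is what supplies the inter-block edge $x_q^{(i')} \to x_q^{(j')}$ along the $G_W$-path — which it does. Once condition (i) is secured this way and condition (ii) is quoted from Lemma~\ref{rankW*A} via Corollary~\ref{fullsrank} and Lemma~\ref{lemth1}, Theorem~\ref{th1} yields generic observability of $(W \otimes A, \overline{D_C})$.
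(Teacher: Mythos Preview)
Your proposal is correct and follows essentially the same route as the paper: use Lemma~\ref{rankW*A} together with Corollary~\ref{fullsrank} and Lemma~\ref{lemth1} to discharge the cycle-cover condition of Theorem~\ref{th1} for $W\otimes A$, and then verify the $Y$-topped path condition by lifting the $G_W$-path $i\to\cdots\to j$ to a path in $G_{W\otimes A}$ that carries the copy of the parent SCC $\mathcal{K}$ in block $i$ into the copy in block $j$, where it is output-connected. Your treatment is in fact more careful than the paper's (which swaps the labels (i)/(ii) of Theorem~\ref{th1} in its proof and does not spell out the inter-block edge mechanics, simplicity of paths, or the singleton-SCC case), but the underlying argument is the same.
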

\begin{proof}
The system matrix~$A$ being full $S$-rank ensures the condition~(i) in Theorem~\ref{th1}. This is because from Corollary~\ref{fullsrank}, there exists a disjoint union of cycles alone that cover all of the state vertices and the $Y$-topped paths are not needed to verify condition~(i). To satisfy condition~(ii), all state vertices in a subsystem associated to every agent, say~$i$, must be a begin vertex of a $Y$-topped path. This condition, according to Lemma~\ref{placement}, is satisfied by having every parent-SCC in $W \otimes A$ be output-connected.

Since in communication matrix~$W$ there is a path from agent $i$ to $j$, in $G_{W \otimes A}$~graph, subsystem of agent $i$ is connected to subsystem of agent $j$. Therefore, every state vertex in parent-SCC~$\mathcal{K}$ in subsystem $i$ is connected to parent-SCC~$\mathcal{K}$ in subsystem $j$ (see Fig.~\ref{figproof}).
\begin{figure}
\centering
\includegraphics[width=3 in]{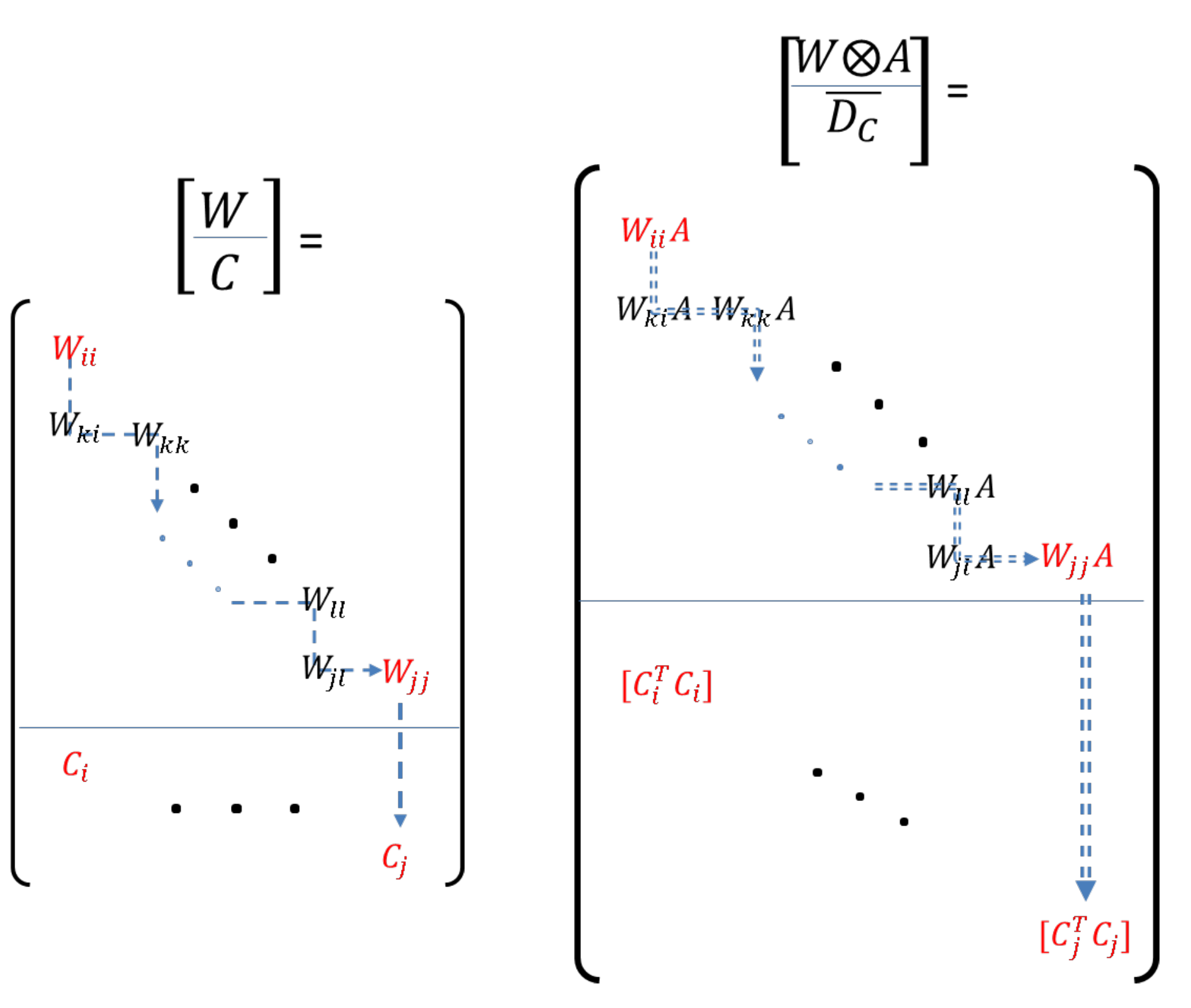}
\caption{This figure illustrates the proof of the Theorem~\ref{fullA,W} showing that: A directed path from agent~$i$ to agent~$j$ in $W$ matrix (left) implies a  directed path from states in subsystem~$i$ to subsystem~$j$ and consequently agent~$j$ in $(W \otimes A)$~matrix.}
\label{figproof}
\end{figure}
Since every state of parent-SCC~$\mathcal{K}$ in subsystem $j$ is $Y$-topped path to output $j$, every state of parent-SCC~$\mathcal{K}$ in subsystem~$i$ is also connected to output~$j$. With this for every parent-SCC~$\mathcal{K}$ in every subsystem $i$ of $G_{W\otimes A}$, all parent-SCCs are output connected and the theorem follows.
\end{proof}

For example, consider again the three-output system in Fig.~\ref{figA}. Having vertices~$\{4,5,6\}$ as parent-SCC, agent~$b$ is the Type-$\alpha$ agent. According to the above theorem any other agent without any observation in~$\{4,5,6\}$, like agent~$a$, must be connected to agent~$b$. This provides a connection from parent-SCC~$\{4,5,6\}$ in subsystem~$a$ to its counterpart SCC in subsystem~$b$ in distributed system graph~$G_{W \otimes A}$, and in turn its output connectivity.

A very important point to mention here is that for full~$S$-rank systems, there only exist Type-$\beta$ and Type-$\gamma$ agents. We prove this in the following lemma. 
\begin{lem}
If a system matrix has full $S$-rank then we only have Type-$\beta$ or Type-$\gamma$ agents.
\end{lem}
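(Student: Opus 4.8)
The statement asserts that when $A$ has full $S$-rank there are no Type-$\alpha$ agents, only Type-$\beta$ and Type-$\gamma$. Recall the classification is made relative to a fixed listing $\mathcal{L}$ of disjoint cycles and $Y$-topped paths covering all state vertices, chosen so that the cycles are maximal. A Type-$\alpha$ agent is one that appears on a $Y$-topped \emph{path} in $\mathcal{L}$. So the plan is to show that when $A$ is full $S$-rank, one can choose $\mathcal{L}$ to contain no $Y$-topped paths at all (equivalently, every state lies on a cycle of $\mathcal{L}$), so that every agent necessarily falls into Type-$\beta$ or Type-$\gamma$.

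First I would invoke Corollary~\ref{fullsrank}: since $A$ has full $S$-rank, the system digraph $G_A$ admits a disjoint union of cycles (over the state vertices alone) that covers \emph{all} of $X$. Call this cycle family $\mathcal{C}$. I would then argue that $\mathcal{C}$ can serve as (the cycle part of) a valid choice of $\mathcal{L}$: since the system is $(A,C)$-observable by Assumption~(iii), condition~(i) of Theorem~\ref{th1} holds, so every state is the begin-vertex of a $Y$-topped path; and condition~(ii) is automatically witnessed by $\mathcal{C}$ because a disjoint union of cycles is a (degenerate) instance of a disjoint union of $Y$-topped paths and cycles. Thus $\mathcal{L}=\mathcal{C}$ is an admissible covering with \emph{no} $Y$-topped paths, and since we are instructed to take $\mathcal{L}$ with maximal cycles, this all-cycle covering is exactly the kind of listing the classification is built on. Consequently the set of Type-$\alpha$ agents—those appearing on a $Y$-topped path in $\mathcal{L}$—is empty, and every agent measures a state lying in some cycle of $\mathcal{L}$, hence is either Type-$\beta$ (if that cycle is a parent cycle) or Type-$\gamma$ (if it is a child cycle), which is the claim.

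The one subtlety I would be careful about is the consistency of the "maximal cycles" requirement with the all-cycle covering: I should note that if $\mathcal{L}=\mathcal{C}$ still had a stray $Y$-topped path, the state on it could instead be absorbed into a cycle (guaranteed by full $S$-rank applied to that state's SCC), strictly increasing the cycle content; so the maximality stipulation in the definition of $\mathcal{L}$ forces the covering to be purely cyclic. The main obstacle is therefore not a deep combinatorial argument but rather making precise that "every state lies on a cycle" is equivalent to "$\mathcal{L}$ contains no $Y$-topped paths after maximization," and tying that cleanly to the definitions of the three agent types. Once that bookkeeping is in place, the lemma follows immediately.
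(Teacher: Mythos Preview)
Your proposal is correct and follows essentially the same approach as the paper's own proof: invoke Corollary~\ref{fullsrank} to obtain a disjoint cycle cover of all state vertices, conclude that the listing $\mathcal{L}$ contains no $Y$-topped paths, and hence that no Type-$\alpha$ agent can exist. Your treatment is somewhat more careful than the paper's in justifying that the ``maximal cycles'' stipulation forces $\mathcal{L}$ to be purely cyclic, but the underlying argument is identical.
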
 
\begin{proof}
The proof is straightforward and relies on Corollary~\ref{fullsrank}. Since $A$ is full $S$-rank, there exists a set of disjoint cycles that covers all of the state vertices. Hence, the set $\mathcal{L}$ introduced for agent classification in Section~\ref{as_cl_cl} does not include any $Y$-topped paths and thus we cannot have any Type-$\alpha$ agent.
\end{proof}
However, when the system matrix is not full~$S$-rank, we also encounter Type-$\alpha$ agents that possess crucial observations from Lemma~\ref{cr_o}. It turns out that if the system matrix is not full~$S$-rank, then even using a fully-connected communication network (complete $G_W$~graph) does not recover observability. We now provide our main result on state fusion.
\begin{theorem} \label{non-full A,W}
Assume that $(A,C_i)$ is not observable at any agent $i$. If system,~$A$, is not full~$S$-rank, then the NKE~\eqref{lp}--\eqref{le} is not observable with state fusion alone, i.e., ~$(W\otimes A,\overline{D_C})$ is not observable for any choice of the communication matrix~$W$.
\end{theorem}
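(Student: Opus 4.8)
The plan is to show that, for \emph{every} communication matrix $W$, the pair $(W\otimes A,\overline{D_C})$ fails condition~(ii) of Theorem~\ref{th1}; since that condition is necessary for generic observability, this settles the claim, and hence (via the discussion around~\eqref{netO}) no local gains $K_k$ can stabilize the error with state fusion alone. By Lemma~\ref{lemth1}, condition~(ii) for $(W\otimes A,\overline{D_C})$ is exactly the identity $S\mbox{-rank}\left(\left[\begin{array}{c}W\otimes A\\\overline{D_C}\end{array}\right]\right)=Nn$, so it suffices to bound this generic rank strictly below $Nn$, \emph{uniformly in $W$}.

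The first step I would take is to eliminate $W$ from the rank test. Using the Kronecker factorization $W\otimes A=(W\otimes I_n)(I_N\otimes A)$ one has
\[
\left[\begin{array}{c}W\otimes A\\\overline{D_C}\end{array}\right]
=
\left[\begin{array}{cc}W\otimes I_n&0\\0&I_{Nn}\end{array}\right]
\left[\begin{array}{c}I_N\otimes A\\\overline{D_C}\end{array}\right].
\]
Because every agent keeps its own estimate, $W$ has nonzero diagonal (Remark~\ref{wrank}) and is full $S$-rank by Corollary~\ref{rankW}; hence $W\otimes I_n$ is generically invertible, so the block-diagonal left factor is generically invertible and left multiplication by it preserves the generic rank. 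Therefore $S\mbox{-rank}\left(\left[\begin{array}{c}W\otimes A\\\overline{D_C}\end{array}\right]\right)=S\mbox{-rank}\left(\left[\begin{array}{c}I_N\otimes A\\\overline{D_C}\end{array}\right]\right)$ for every admissible $W$. After a block-row permutation the right-hand matrix is block diagonal with the $N$ blocks $\left[\begin{array}{c}A\\C_i^TC_i\end{array}\right]$, and since $C_i^TC_i$ has the same row space (equivalently, for the selection-type outputs used here, the same sparsity support) as $C_i$, we obtain $S\mbox{-rank}\left(\left[\begin{array}{c}W\otimes A\\\overline{D_C}\end{array}\right]\right)=\sum_{i=1}^{N}S\mbox{-rank}\left(\left[\begin{array}{c}A\\C_i\end{array}\right]\right)$. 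Thus condition~(ii) holds for the distributed system iff it holds for every local pair $(A,C_i)$: state fusion, encoded by the off-diagonal blocks of $W$, cannot change it.

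It remains to prove the purely local claim $S\mbox{-rank}\left(\left[\begin{array}{c}A\\C_i\end{array}\right]\right)\le n-1$ for each $i$, which combined with the identity above gives $S\mbox{-rank}\left(\left[\begin{array}{c}W\otimes A\\\overline{D_C}\end{array}\right]\right)\le N(n-1)<Nn$. I would argue as follows: since $A$ is not full $S$-rank, by Corollary~\ref{fullsrank} $G_A$ admits no disjoint cycle cover, so every disjoint union of cycles and $Y$-topped paths covering $X$ must use a $Y$-topped path, and the state vertices on those paths and in the parent cycles of $\mathcal{L}$ are precisely the ones whose contribution to the rank test must be supplied by a measurement (state fusion only links copies of states inside $W\otimes A$ and never creates a new measured vertex). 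Since $(A,C_i)$ is not observable, agent $i$'s own measurements do not supply it; the delicate point — and, I expect, the main obstacle — is to show that this insufficiency manifests as a \emph{structural rank} deficit of $\left[\begin{array}{c}A\\C_i\end{array}\right]$ rather than merely as a failure of the reachability condition~(i) of Theorem~\ref{th1}. This is where the agent taxonomy of Section~\ref{as_cl_cl} and Lemma~\ref{cr_o} do the work: a crucial observation (carried by a Type-$\alpha$ or Type-$\beta$ agent) that is absent from $C_i$ leaves an unmatched deficient state vertex in $\left[\begin{array}{c}A\\C_i\end{array}\right]$ that the Kronecker-structured coupling $W\otimes A$ cannot repair, forcing the strict inequality. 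One should take care with the borderline case in which some $(A,C_i)$ already passes the local rank test and is unobservable only for reachability reasons; there the rank argument must be supplemented by showing that condition~(i) for $(W\otimes A,\overline{D_C})$ likewise cannot be restored by any $W$.
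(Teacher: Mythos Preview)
Your factorization $W\otimes A=(W\otimes I_n)(I_N\otimes A)$ is a tidier route than the paper's but lands at the same checkpoint. The paper works block-column by block-column: the $i$th block column of $\left[\begin{smallmatrix}W\otimes A\\\overline{D_C}\end{smallmatrix}\right]$ is a stack of scalar multiples $w_{ji}A$ together with the single block $C_i^TC_i$, so its row space coincides with that of $\left[\begin{smallmatrix}A\\C_i^TC_i\end{smallmatrix}\right]$; summing over the disjoint block columns gives exactly your identity. Both arguments therefore reduce the theorem to the purely local claim that $S\mbox{-rank}\left[\begin{smallmatrix}A\\C_i\end{smallmatrix}\right]<n$ for at least one $i$.

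On that local claim the paper does \emph{not} carry out anything like the argument you sketch via the agent taxonomy and Lemma~\ref{cr_o}; it simply asserts that such an $i$ exists, citing only ``no $(A,C_i)$ is observable'' and ``$A$ is $S$-rank deficient'' as justification. Your caution about the ``borderline'' case is therefore well placed, and the paper does not supply the missing step. More to the point, your proposed fallback for that case---show that condition~(i) of Theorem~\ref{th1} must also fail for $(W\otimes A,\overline{D_C})$---will not go through: the off-diagonal blocks $w_{ij}A$ insert cross-subsystem edges in $G_{W\otimes A}$ that route states in agent $i$'s copy to outputs attached to agent $j$'s copy, which is precisely the mechanism Theorem~\ref{fullA,W} exploits to \emph{restore} output-connectivity. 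Concretely, take $n=3$ with $A$ having a zero first row, self-loops at states $2$ and $3$, and edges $1\to2$, $1\to3$; let agent~$1$ measure state~$2$ and agent~$2$ measure state~$3$. Then $A$ is $S$-rank deficient and neither $(A,C_i)$ is observable (each misses a parent SCC), yet both $\left[\begin{smallmatrix}A\\C_i\end{smallmatrix}\right]$ have full $S$-rank and a fully connected $W$ makes $(W\otimes A,\overline{D_C})$ generically observable. So in this regime neither your rank argument nor your reachability argument closes; the statement seems to need an additional hypothesis (for instance, that some agent lacks a Type-$\alpha$ measurement, which \emph{does} force the local rank deficit) that neither you nor the paper makes explicit.
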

\begin{proof}
Let~$i$ be an agent for which condition (i) in Theorem~\ref{th1} does not hold, i.e.,
\begin{equation}
S\mbox{-rank}\left(
\left[
\begin{array}{c}
A\\
C^T_iC_i
\end{array}
\right]
\right) <n.
\end{equation}
Such an agent always exists because: (i) based on the assumption (ii) in Section~\ref{pfff}, the entire system is not observable at any agent; and (ii) the matrix~$A$ is not full~$S$-rank. Now consider~$(W \otimes A, \overline{D_C})$ for the best-case scenario where $G_W$ is an all-to-all network.
Let~$W_i$ be the~$i$th column of the adjacency matrix~$W$. Obviously,~$W_i \otimes A$ is the~$i$th \textit{block column} of~$(W \otimes A)$, and contains block matrices~$W_{ji} A$ where~$W_{ji} \neq 0$ is the element in $j$th~row and $i$th~column of the full matrix $W$. It follows that
\begin{equation}
S\mbox{-rank}\left(
\left[
\begin{array}{c}
W_{ji}A\\
C^T_iC_i
\end{array}
\right]
\right) <n,
\end{equation}
for all~$j=1,...,N$ as~$w_{ji}\neq0$ and scalar multiplication does not change the structure and the $S$-rank (maximum possible rank over all values). Since~$A$ is not full~$S$-rank, $W_i \otimes A$ has rank less than~$n$ as stacking matrices with the same structure on top of each other (see Fig.\ref{figWAC}-Left) does not improve the~$S$-rank, which immediately results in
\begin{equation}
S\mbox{-rank}\left(
\left[
\begin{array}{c}
W_{i} \otimes A\\
C^T_iC_i
\end{array}
\right]
\right) <n.
\end{equation}
Consequently, according to Fig.\ref{figWAC}-right, the structure of the matrix $W \otimes A$ is given as the side-by-side concatenation of the matrices $W_i \otimes A$. Thus we have,
\begin{figure}
\centering
\includegraphics[width=3 in]{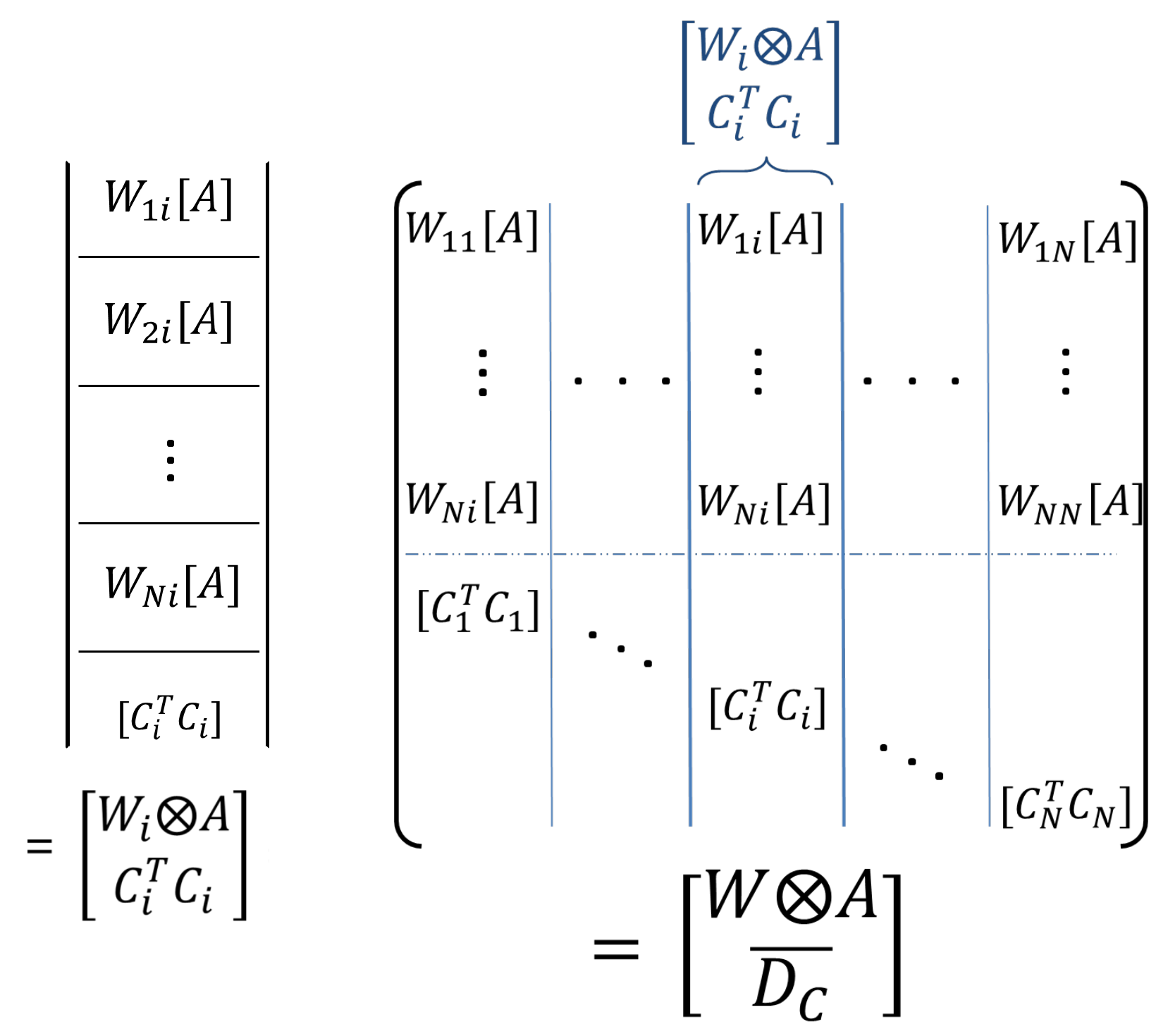}
\caption{The figures illustrates the structure of $W_i \otimes A$ (left) and  matrix $W \otimes A$ (right) in the proof of Theorem~\ref{non-full A,W}.}
\label{figWAC}
\end{figure}
\begin{equation}
S\mbox{-rank}\left(
\left[
\begin{array}{c}
W \otimes A\\
\overline{D_C}
\end{array}
\right]
\right) <Nn.
\end{equation}
This holds for almost all choices of non-zero elements in the $W$~matrices. Clearly, for any lower $S$-rank~$W$ the rank of $W \otimes A$ cannot be recovered as well. Therefore, according to Lemma~\ref{lemth1}, the condition (i) in Theorem~\ref{th1} is violated.
\end{proof}
The above theorem shows that when $A$ is $S$-rank deficient, then using state fusion cannot always guarantee the observability of the system and thus, the agents need access to more measurement data to recover observability, which is discussed next. In contrast, Theorem~\ref{fullA,W} shows that when $A$ is full $S$-rank, then state fusion alone can recover observability and further provides a method for the required agent communication. Clearly, these two results are to be viewed as a direct consequence of Assumptions (ii) and (iii) in Section~\ref{as_cl_a}. To the best of our knowledge, these conditions have not been developed before.

\subsection{Output fusion}
The other solution to make the NKE observable is output fusion, that is the second update level given in the equation~\eqref{le}. According to the formulation, each agent shares its measurement with its direct neighbors and implements this as an innovation to update its prediction. According to Table~\ref{tab}, for output fusion the structure of the matrix~$D_C$ has to be determined such that $I \otimes A,D_C$ is observable.

Based on the definition of~$D_C$, the~$i$th $n \times n$ diagonal block of~$D_C$ contains all of the measurements in the extended neighborhood of agent~$i$. In the distributed system graph ~$G_{(I \otimes A,D_C)}$, say for the agent~$i$, this is equivalent to adding all measurements in the neighborhood~$\mathcal{N}_i$ to the subsystem~$i$.

We now provide our main result on output fusion.
\begin{theorem} \label{Dc}
The system~$(I \otimes A,D_C)$ is observable if and only if:
\begin{enumerate}[(i)]
\item The sub-graph of all Type-$\alpha$ and Type-$\beta$ agents is a complete graph, i.e. all these crucial agents are needed to be directly linked together;
\item Every Type-$\gamma$ agent is directly connected to all Type-$\alpha$ and Type-$\beta$ agents.
\end{enumerate}
\end{theorem}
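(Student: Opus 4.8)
\emph{Plan and reduction.} Since $I\otimes A$ is block diagonal, the pair $(I\otimes A,D_C)$ is generically observable if and only if each of its $N$ diagonal blocks is, i.e.\ for every agent $i$ the pair $(A,\,C_{\mathcal{D}_i})$ is generically observable, where $C_{\mathcal{D}_i}$ is the stack of the output matrices $C_j$ of all $j\in\mathcal{D}_i$; recall from Section~\ref{pfff} that the $i$-th block $\sum_{j\in\mathcal{D}_i}C_j^TC_j$ of $D_C$ encodes precisely the measurements available in $\mathcal{D}_i$, so it is equivalent to $C_{\mathcal{D}_i}$ for observability. I would first record the dictionary between the statement's graph conditions and $G_W$: conditions (i)--(ii) hold \emph{exactly when every agent's neighborhood $\mathcal{D}_i$ contains all Type-$\alpha$ and Type-$\beta$ agents}. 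Indeed (i) says the crucial agents pairwise see each other, (ii) says each Type-$\gamma$ agent sees all crucial agents, and together with $i\in\mathcal{D}_i$ this covers all three types. Thus the theorem reduces to: for every $i$, $(A,C_{\mathcal{D}_i})$ is observable if and only if every $\mathcal{D}_i$ contains all Type-$\alpha$ and Type-$\beta$ agents. I would also note the monotonicity of generic observability used throughout: adjoining a row to $C$ can only enlarge the set of $Y$-topped states and can only increase $S\mbox{-rank}[A^T\ C^T]^T$, so by Theorem~\ref{th1} and Lemma~\ref{lemth1} deleting outputs can never create observability.

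\emph{Necessity.} Suppose $(I\otimes A,D_C)$ is observable but some crucial agent $j$ (Type-$\alpha$ or Type-$\beta$) is not in $\mathcal{D}_i$ for some $i$. Then the outputs available to subsystem $i$ form a sub-collection of $\{C_1,\dots,C_N\}\setminus\{C_j\}$. By Lemma~\ref{cr_o} and the definition of a crucial observation, $(A,\{C_k\}_{k\neq j})$ is unobservable, so by monotonicity $(A,C_{\mathcal{D}_i})$ is unobservable, contradicting observability of the corresponding diagonal block. Hence every $\mathcal{D}_i$ contains every crucial agent, which by the dictionary above is exactly conditions (i)--(ii).

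\emph{Sufficiency.} Assume (i)--(ii), so $C_{\mathcal{D}_i}\supseteq C_{\mathrm{cr}}$ for every $i$, where $C_{\mathrm{cr}}$ stacks the $C_j$ of all Type-$\alpha$ and Type-$\beta$ agents. By monotonicity it suffices to show $(A,C_{\mathrm{cr}})$ is generically observable, which I would do by verifying the two conditions of Theorem~\ref{th1} directly from the agent classification of Section~\ref{as_cl_cl}. For condition (ii): the maximal-cycle listing $\mathcal{L}$ is a disjoint union of cycles and $Y$-topped paths covering all of $X$, and by the definition of a Type-$\alpha$ agent its $Y$-topped paths terminate only at Type-$\alpha$ outputs; hence $\mathcal{L}$ is itself a valid witness for (ii) using only $C_{\mathrm{cr}}$. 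For condition (i): starting from any state $x$ and following state-edges, one reaches a sink of the state-SCC condensation, i.e.\ a parent SCC (equivalently, one exhausts a parent cycle of $\mathcal{L}$; the exceptional case of a state with no state-successor is covered instead by its $\mathcal{L}$ path, which is $Y$-topped at a Type-$\alpha$ output). A parent SCC has no outgoing state-edge, so by global $(A,C)$-observability (Assumption (iii)) it must be directly measured, and any agent measuring it is Type-$\beta$; thus $x$ reaches a crucial output. Combining, $(A,C_{\mathrm{cr}})$ satisfies Theorem~\ref{th1}, so each block is observable and therefore $(I\otimes A,D_C)$ is observable.

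\emph{Main obstacle.} The delicate point is the reachability half of sufficiency: making rigorous that every parent SCC is monitored \emph{by a Type-$\beta$ agent} (not merely ``output-connected''), which forces one to pin down the relationship between parent SCCs, the parent cycles appearing in $\mathcal{L}$, and the maximality of the cycles in $\mathcal{L}$ --- and to check that this same listing $\mathcal{L}$ simultaneously certifies condition (ii) through its Type-$\alpha$-topped paths. The monotonicity statements, while routine, must be invoked explicitly since the entire necessity argument rests on them.
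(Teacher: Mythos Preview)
Your proposal is correct and follows the paper's own (very terse) approach: both directions reduce to the claim that an agent's local pair is observable iff its neighborhood contains every crucial (Type-$\alpha$/$\beta$) observation, which is precisely the content of Lemma~\ref{cr_o}. The paper's proof is a two-sentence sketch; you supply the block-diagonal reduction, the explicit monotonicity argument, and a direct verification of Theorem~\ref{th1} for $(A,C_{\mathrm{cr}})$, all of which the paper leaves implicit.
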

\begin{proof}

\emph{Sufficiency:} With the given conditions (i) and (ii), each agent has access to all crucial measurements. This makes every agent generically observable.

\emph{Necessity:} If an agent is not connected to one of the crucial agents, then it is missing a crucial measurement and the statement follows.
\end{proof}
It can be verified that if a system is not~$(A,C)$ observable then even
using a fully-connected communication network does not recover
observability. Clearly, the only way to get a stable estimation error is by increasing the number of state observations and recovering the $(A,C)$~observability. An interesting result on how to recover $(A,C)$~observability can be found in~\cite{boukhobza-recovery}.

\subsection{Main result}
Finally, we consolidate our results in previous subsections on state and output fusion. Theorem~\ref{fullA,W} sets the condition for state fusion for full $S$-rank systems, i.e., conditions for $(W \otimes A,\overline{D_C})$~generic observability, while Theorem~\ref{non-full A,W} states that for general $S$-rank deficient systems networked observability cannot be achieved via the state fusion alone. Output fusion, i.e., generic observability of $(I \otimes A,D_C)$, is discussed in Theorem~\ref{Dc}. Combining these results, we now provide the main theorem on generic observability of the single time-scale NKE protocol in Eqs.~\eqref{lp}--\eqref{le}.

\begin{theorem} \label{mainth}
For $(W \otimes A, D_C)$~observability with minimal number of communications, each agent needs:
\begin{enumerate}
\item A direct link \textit{from} all the Type-$\alpha$ agents (output-fusion);
\item A directed path \textit{to} (at least) one Type-$\beta$ agent for every parent SCC of $A$. This means, if there is two or more agents observing the same parent SCC, a directed path to any one of them is sufficient (state-fusion).
\end{enumerate}
\end{theorem}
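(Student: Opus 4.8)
\emph{Proof proposal.} The plan is to read the pair $(W\otimes A, D_C)$ through Theorem~\ref{th1} and Lemma~\ref{lemth1} applied to the distributed digraph $G_{W\otimes A}$, which is $N$ disjoint copies of the system digraph $G_A$ — one ``subsystem'' per agent — interconnected by the edges that each off-diagonal entry $w_{ij}\ne 0$ places \emph{from} the states of subsystem $j$ \emph{into} the states of subsystem $i$, together with the outputs that $D_C$ attaches to subsystem $i$, namely the measurements of every agent in $\mathcal{D}_i$. The observation that organizes the whole argument is a dichotomy between the two conditions of Theorem~\ref{th1}: condition~(ii), equivalently $S\text{-rank}\!\left(\left[\begin{smallmatrix}W\otimes A\\ D_C\end{smallmatrix}\right]\right)=Nn$ by Lemma~\ref{lemth1}, is controlled by the $Y$-topped paths of the maximal listing $\mathcal{L}$ of Section~\ref{as_cl_cl}, whose end-vertices are exactly the Type-$\alpha$ agents; by Theorem~\ref{non-full A,W} this deficiency cannot be removed by state fusion, so it forces \emph{output} fusion directly from the Type-$\alpha$ agents. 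Condition~(i), the reachability of an output from every state, is controlled by the parent cycles of $\mathcal{L}$, and — as in the proof of Theorem~\ref{fullA,W} — it \emph{can} be repaired by state fusion, so only a directed path to a monitoring Type-$\beta$ agent is required. Items~1 and~2 of the statement are precisely these two requirements, and the minimality claim is that nothing else is needed.

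For \textbf{sufficiency} I would assume items~1 and~2 hold at every agent and fix a subsystem $i$. Inside subsystem $i$ take the copy of the listing $\mathcal{L}$: its cycles (parent and child) are already present in the copy of $G_A$, and each of its $Y$-topped paths ends at a state measured by some Type-$\alpha$ agent $\alpha$; by item~1 there is a direct link $i\leftarrow\alpha$, so $C_\alpha^{T}C_\alpha$ is a summand of the $i$-th diagonal block of $D_C$ and that path is genuinely $Y$-topped inside subsystem $i$. Taking the union over all $N$ subsystems yields a disjoint union of cycles and $Y$-topped paths covering all $Nn$ state vertices, so condition~(ii) of Theorem~\ref{th1} holds by Lemma~\ref{lemth1}. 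For condition~(i) I would take an arbitrary state $x$ of subsystem $i$ and locate it in $\mathcal{L}$: if $x$ lies on a $Y$-topped path of $\mathcal{L}$ it is $Y$-topped locally as above; if $x$ lies on a parent cycle $\mathcal{K}$ then item~2 supplies a directed path in $G_W$ from agent $i$ to a Type-$\beta$ agent $j$ monitoring $\mathcal{K}$, which lifts in $G_{W\otimes A}$ to a directed path from the copy of $\mathcal{K}$ in subsystem $i$ into the output-connected copy of $\mathcal{K}$ in subsystem $j$; and if $x$ lies on a child cycle then, by the argument used in the proof of Lemma~\ref{cr_o}, it has a directed path within the copy of $G_A$ to a Type-$\alpha$ or Type-$\beta$ state, reducing to one of the two previous cases. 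Both conditions of Theorem~\ref{th1} then hold and $(W\otimes A, D_C)$ is generically observable.

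For \textbf{necessity and minimality} I would first remove the link $i\leftarrow\alpha$ from some Type-$\alpha$ agent $\alpha$. Exactly as in the proof of Theorem~\ref{non-full A,W}, state fusion only stacks structurally identical copies and cannot raise generic rank: $S\text{-rank}(W\otimes A)=N\,S\text{-rank}(A)=Nn-Nd$ with $d=n-S\text{-rank}(A)\ge 1$, and since each output vertex can terminate at most one path of a disjoint listing, giving the Type-$\alpha$ measurements to only a proper subset of the subsystems raises $S\text{-rank}\!\left(\left[\begin{smallmatrix}W\otimes A\\ D_C\end{smallmatrix}\right]\right)$ by strictly less than $Nd$, so it stays below $Nn$ and condition~(ii) fails; in particular relaying a Type-$\alpha$ measurement over a chain of state-fusion links does not help, which is why item~1 must be a \emph{direct} output link at \emph{every} agent, including the non-crucial Type-$\gamma$ ones. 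Next I would remove, for some parent SCC $\mathcal{K}$ that hosts a parent cycle of $\mathcal{L}$, every directed $G_W$-path from agent $i$ to a Type-$\beta$ agent monitoring $\mathcal{K}$ (and suppose $i$ does not monitor $\mathcal{K}$ itself): since $\mathcal{K}$ is a parent SCC it has no outgoing edge to a state outside itself, so the copy of $\mathcal{K}$ in subsystem $i$ reaches no output and condition~(i) fails. Finally, the same analysis shows no edge \emph{into} a Type-$\gamma$ agent is ever needed, so items~1 and~2 list exactly the required communications — this is the precise sense in which the construction improves on the output-fusion-only Theorem~\ref{Dc}. The hard part will be this necessity half: one must decompose the maximum matching of $\left[\begin{smallmatrix}W\otimes A\\ D_C\end{smallmatrix}\right]$ across the $N$ subsystems to prove both that a state-fusion relay cannot replace a direct Type-$\alpha$ link and that \emph{all} Type-$\alpha$ measurements, and no fewer, make $(A,\sum_{j\in\mathcal{D}_i}C_j^{T}C_j)$ satisfy Lemma~\ref{lemth1} — a fact that hinges on $\mathcal{L}$ having been chosen with maximal cycles — together with the routine bookkeeping of degenerate cases (an agent that is itself the Type-$\beta$ agent of a parent SCC needs no path; a single-vertex parent SCC without a self-loop is absorbed into item~1).
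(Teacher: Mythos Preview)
Your proposal is correct and follows the same route as the paper, whose entire proof is the single sentence ``a direct consequence of Theorems~\ref{fullA,W},~\ref{non-full A,W}, and~\ref{Dc}.'' You have unpacked that sentence --- your sufficiency half is exactly Theorem~\ref{fullA,W} (for condition~(i) via state fusion to Type-$\beta$ agents) combined with the output-fusion mechanism of Theorem~\ref{Dc} (for condition~(ii) via direct Type-$\alpha$ links), and your necessity/minimality half, which the paper does not spell out at all, is a sensible elaboration of Lemma~\ref{cr_o} together with the $S$-rank obstruction of Theorem~\ref{non-full A,W}.
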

\begin{proof}
The proof is a direct consequence of Theorems~\ref{fullA,W},~\ref{non-full A,W}, and~\ref{Dc}.
\end{proof}

The following is a complementary remark to the Theorem~\ref{mainth}.
\begin{rem}
In the case of Type-$\beta$ agents, every agent requires either a directed path \textit{to} each Type-$\beta$ agent (as stated in the Theorem~\ref{mainth}) or a direct link \textit{from} each Type-$\beta$ agent (as stated in the Theorem~\ref{Dc}); either one of these two conditions is sufficient for observability. However, the first strategy requires less number of links compared to the latter one, and therefore, it is preferred in terms of the minimal number of links in the communication network.
\end{rem}

Notice that compared to the typical assumptions on the agents' network in the literature, like strong connectivity or having a cyclic path, here we provide milder condition on the non-crucial agents; as there is no need for connectivity \textit{to} these agents but \textit{from} these agents. Furthermore, an agent may have no measurement of the system and still be able to estimate the state of the system via the proposed strategies. Such agents, for example, may play a role to provide and maintain the connectivity of the agent communication network \cite{Kopeikin11}, or even, maintaining directed paths to Type-$\beta$ agents as stated in the second condition of the Theorem~\ref{mainth}.

\section{Design of local estimator gain} \label{K}
In this subsection, we consider the design of the estimator gain matrix, $K_k$. Notice that having $(W \otimes A, D_C)$~observable guarantees a \textit{full} gain matrix, $K_k$, to stabilize the NKE error. However, according to protocol \eqref{le}, we need a local gain matrix, $K_k$, which is \textit{block-diagonal} with $N$ blocks of $n\times n$ matrices. For this section, we assume a constant estimator gain matrix is applied, i.e., the matrix $K_k$ is independent of time, $k$, and denote it by $K$.

A partial list of references devoted to find constrained estimator gain for control and estimation is~\cite{usman_cdc:11,KhanJadArxive,5717159,rami:97,siljak08}.
Here, we use the Linear Matrix Inequality (LMI) approach in \cite{5717159,rami:97}. However, in general, the corresponding LMIs do not have a solution, because of the structural constraints (block-diagonal) on the gain matrix, $K$. This is the main difficulty in distributed estimation and control as convex/semidefinite approaches are not directly applicable. To this end, we implement an iterative procedure to solve LMIs under
structural constraints.

In this regard, the following lemma presents the optimization approach to solve the estimator gain design problem. Interested readers may find more details in \cite{usman_cdc:11,5717159}.
\begin{lem} \label{optm}
If the NKE protocol \eqref{lp}--\eqref{le} is observable, then estimator gain matrix, $K$, is the solution of the following optimization problem.
\begin{equation}
\begin{aligned}
& \displaystyle\min ~~  \mathbf{trace}(XY) \\
& \text{subject to} ~~ X,Y>0,\\
& ~~~~~~~~~~~~ \left[ \begin{array}{cc} X&\widehat{A}^T\\ \widehat{A}&Y\\ \end{array} \right] > 0,
& \left[ \begin{array}{cc} X&I\\ I&Y\\ \end{array} \right]>0,\\
& ~~~~~~~~~~~~~ K\mbox{~is~block-diagonal}.
\end{aligned}
\label{min}
\end{equation}
where,
\begin{equation}
\widehat{A} = W \otimes A - KD_C(W \otimes A)
\end{equation}
\end{lem}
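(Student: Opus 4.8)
The plan is to read the optimization \eqref{min} as the cone-complementarity reformulation of the block-structured discrete-time stabilization problem attached to the error recursion \eqref{err1}. I would start from Proposition~\ref{ned_pr}: the networked error satisfies $\mathbf{e}_k=\widehat{A}\,\mathbf{e}_{k-1}+\mathbf{q}_k$ with $\widehat{A}=W\otimes A-KD_C(W\otimes A)$, so a constant gain $K$ stabilizes the error if and only if $\widehat{A}$ is Schur, under the additional constraint imposed by protocol~\eqref{le} that $K$ be block-diagonal with $N$ blocks of size $n$. The discrete-time Lyapunov criterion says $\widehat{A}$ is Schur exactly when there is $P>0$ with $\widehat{A}^{T}P\widehat{A}-P<0$; since $\widehat{A}$ is affine in $K$, a Schur complement rewrites this as
\begin{equation*}
\left[\begin{array}{cc}P&\widehat{A}^{T}\\ \widehat{A}&P^{-1}\end{array}\right]>0 .
\end{equation*}

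Next I would substitute $X=P$ and $Y=P^{-1}$, which turns the last inequality into the first LMI of \eqref{min} and leaves $XY=I$ as the only nonconvex constraint. Following \cite{rami:97,5717159}, I relax $XY=I$ to the convex conditions $X,Y>0$ and $\bigl[\begin{smallmatrix}X&I\\ I&Y\end{smallmatrix}\bigr]>0$; the latter is a Schur-complement form of $Y\succ X^{-1}$, and then $X^{1/2}YX^{1/2}\succeq I$ gives $\mathbf{trace}(XY)=\mathbf{trace}(X^{1/2}YX^{1/2})\ge Nn$ with equality iff $Y=X^{-1}$. Hence minimizing $\mathbf{trace}(XY)$ over the two LMIs together with the block-diagonal constraint on $K$ has optimal value $Nn$ precisely when the relaxation is tight, and in that case $P=X$ is a bona fide Lyapunov certificate for a block-diagonal stabilizing gain; conversely, any stabilizing block-diagonal $K$ supplies a feasible triple $(X,Y,K)$ attaining $Nn$. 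That is the asserted equivalence. The role of the hypothesis that the NKE is observable is to make the problem non-vacuous: by the remarks following Proposition~\ref{ned_pr}, observability of $(W\otimes A,D_C)$ is equivalent to the existence of some (possibly full) stabilizing gain, i.e.\ the problem obtained from \eqref{min} by deleting the block-diagonal requirement is feasible.

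The main obstacle is exactly that block-diagonal requirement on $K$. It makes the original feasibility set nonconvex in $(K,P)$, and --- as the surrounding text notes --- it may fail to be feasible even when the unstructured version is, which is why the statement is phrased as an optimization problem and not as a closed-form gain. Concretely, after the change of variables all constraints are linear matrix inequalities in $(X,Y,K)$ \emph{except} the hidden coupling $XY=I$, so the problem must be attacked by linearizing $\mathbf{trace}(XY)$ about the current iterate $(X_t,Y_t)$ --- replacing it by $\mathbf{trace}(X_tY+XY_t)$ --- and solving the resulting semidefinite program with the block-diagonal constraint imposed linearly; the objective is monotonically non-increasing and drives $XY$ toward $I$. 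Establishing that the optimal value $Nn$ is actually reached, i.e.\ certifying structured stabilizability, is the genuinely delicate point of the argument; the rest is routine Schur-complement manipulation.
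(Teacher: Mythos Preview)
Your proposal is correct and follows precisely the approach the paper has in mind. The paper does not supply a self-contained proof of this lemma; it merely cites \cite{usman_cdc:11,5717159,rami:97}, remarks that the goal is a block-diagonal $K$ making $\widehat{A}$ Schur, and notes that the second LMI forces the optimum $\mathbf{trace}(XY)=nN$ at $Y=X^{-1}$, after which the cone-complementarity linearization of \cite{rami:97} is invoked verbatim to produce Algorithm~\ref{algorithm}. Your derivation---Lyapunov inequality, Schur complement to the $2\times2$ block LMI, the substitution $X=P$, $Y=P^{-1}$, and the relaxation of $XY=I$ with the trace lower bound---is exactly that argument spelled out, and your closing caveat about structured feasibility matches the paper's own disclaimer that ``in general, the corresponding LMIs do not have a solution, because of the structural constraints (block-diagonal) on the gain matrix.''
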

In fact, we need a block-diagonal $K$ such that $\widehat{A}$ is Schur (i.e. $\rho (\widehat{A}) <1$).

Notice that, the solution to the second LMI is equivalent to $X = Y^{-1}$, which gives the minimum trace and the optimal value as $nN$. The nonlinear product of X and Y can be replaced with a linear approximation \cite{5717159,rami:97,pang:95},
$\phi_{lin}(X,Y) = \mathbf{trace}(Y_0X + X_0S)$
and an iterative algorithm \cite{rami:97} can be used to minimize $\mathbf{trace}(XY)$
under the given constraints.
\begin{algorithm} 
\caption{Iterative calculation of local gain estimator, $K$.}
\begin{algorithmic}
\item Find feasible points $X0, Y0,K$. If no such points exist,
Terminate.
\item At iteration $t>0$ minimize $\mathbf{trace}(Y_tX + X_tY)$ under
the constraints given in \eqref{min} and find $X,Y,K$.
\item If $\rho (\widehat{A}) <1$ terminate, otherwise set $Y_{t+1}=Y,~X_{t+1}=X$ and run the step 2 for next iteration $t=t+1$.
\end{algorithmic}
\label{algorithm}
\end{algorithm}

It is shown in \cite{rami:97} that $\mathbf{trace}(Y_tX + X_tY)$ is a non-increasing sequence that converges to $2nN$. In this regard, a stopping
criterion in step 3 of the above algorithm can also
be established in terms of reaching within $2nN + \epsilon$ of the
trace objective. The iterative procedure given above
is centralized, however, the center has to implement this process
only once, off-line; then it transmits the
estimator gains to each agent and plays
no further role in the implementation of local estimators; each agent, subsequently, observes and performs in-network operations to implement the estimator.
A single time-scale algorithm can also be employed,
where the above iterative procedure is implemented at the
same time-scale $k$ as of the dynamical system. With
this approach, the estimator gain iterations, $K_{t}$, at each $t$ is applied to the estimator at time-step, $k$, and may be transmitted to each agent at each step $k$. This is helpful when the implementation is assumed in real-time.

\section{Illustrative example and simulation}\label{example}
Consider the system,~$(A,C)$, given in Fig.~\ref{figAC}. The structure of these matrices is given by
\begin{eqnarray}\label{Asim}
A = \left[
\begin{array}{ccccccc}
0&\times&0&0&0&0&0\\
\times&0&0&0&0&0&0\\
0&\times&0&0&0&0&0\\
0&0&\times&0&\times&\times&0\\
0&0&0&\times&0&0&0\\
0&0&0&\times&\times&0&\times\\
0&0&0&0&0&0&\times\\
\end{array}
\right],
\end{eqnarray}
\begin{eqnarray}\label{Csim1}
\left[
\begin{array}{c}
C_a\\
C_b\\
C_c
\end{array}
\right]&=&\left[
\begin{array}{ccccccc}
0&0&\times&0&0&0&0\\
0&0&0&0&\times&0&0\\
0&0&0&0&0&0&\times
\end{array}
\right].
\end{eqnarray}
Now, recall that based on Theorem~\ref{th1}, the system is globally observable by collection of the three measurements. The state partitioning with maximal cycles so that this partitioning covers all the states (in light of condition (i)--Theorem~\ref{th1}) is $\mathcal{L}=\{(4,5,6,4), (1,2,1), (7,7), (3,a)\}$. By definition, agent~$a$ is Type-$\alpha$, agent~$b$ is Type-$\beta$ and agent~$c$ is Type-$\gamma$. It can further be verified that agents $a$ and $b$ possess crucial observations. 

To better illustrate the networking effect, we first note that the
networked system with no information sharing, i.e., the graph
associated to~$(I \otimes A, \overline{D_c})$, is not observable at any of the agents individually (Fig.~\ref{figA}). To make the networked system observable at each agent, we propose the following communication matrices~$W_1$ and~$W_2$, and their associated graphs~$G_W$ in Fig.~\ref{fignet} as two minimal networks making the system observable.

\begin{eqnarray}\label{CsimCE}
W_1 = \left[
\begin{array}{ccc}
\times&\times&0\\
\times&\times&0\\
\times&\times&\times
\end{array}
\right],\qquad
W_2 =\left[
\begin{array}{ccc}
\times&0&\times\\
\times&\times&0\\
\times&0&\times
\end{array}
\right].
\end{eqnarray}

The graph associated to~$W_1$ is proposed based on the
Theorem~\ref{Dc}. In this network crucial agents,~$\{a,b\}$,
are directly linked among each other and both have a directed link to agent, $c$, with no crucial observation. The second communication network~$W_2$ is based on Theorem~\ref{mainth}; there is a direct link from agent~$a$ (Type-$\alpha$) to all other agents, and there is a path from every other agent to agent~$b$ (Type-$\beta$). It can be verified that for both topologies $(W \otimes A,D_C)$ is generically observable.
\begin{figure}
\centering
\includegraphics[width=3in]{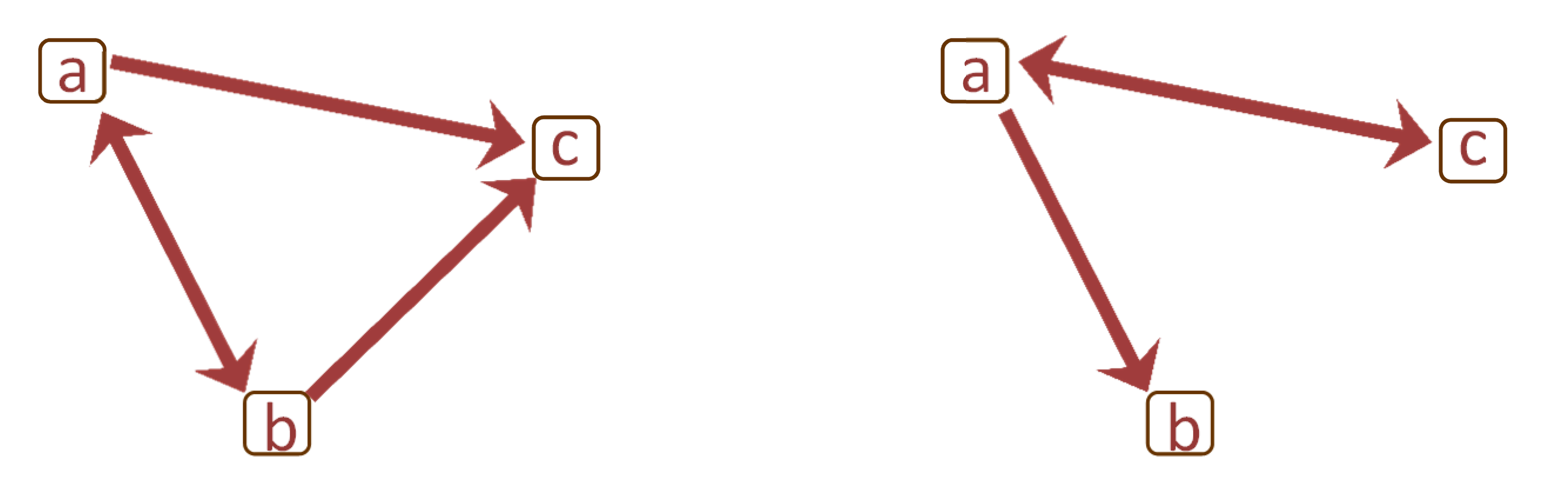}
\caption{(Left) communication network for the three agents using
 output fusion (Theorem~\ref{Dc}), (Right) both measurement and state fusion (Theorem~\ref{mainth}).}
 \label{fignet}
\end{figure}
Note that the solution for the network design problem is not unique,
and there maybe other examples of communication network satisfying
the conditions in the last section. In addition, any network
including one of these two topologies as a sub-graph is also a
solution to the NKE problem. For example, under
the full structured rank assumption of system~$A$, any strongly
connected network among agents suffices for individual
observability~\cite{usman_cdc:11}.

For simulation, we consider a random valued matrix, $A$, with the structure in Eq.~\eqref{Asim}, and an output matrix, Eq.~\eqref{Csim1}, with all non-zero entries equal to $1$. The system eigenvalues are as follows.
\begin{eqnarray*}
eig(A) = [-1.0838,1.0838 ,0.6511, -0.5571,\ldots\\ -0.0940, 0.0000,1.3072]
\end{eqnarray*}
Clearly, the dynamical system is unstable, since $\rho (A)= 1.0838$. We choose the agents' network according to Fig.~\ref{fignet} (Right) with random link weights such that it remains stochastic. We use Algorithm~\ref{algorithm} to find the block-diagonal gain matrix, $K$. Using this gain matrix, the eigen-values of the error dynamics, i.e., of the matrix, $\widehat{A}$, are as follows,
\begin{eqnarray*}
eig(\widehat{A}) = [0.8190 , 0.6511 ,-0.5571, 0.0073 \pm 0.3159i,\ldots\\
-0.2700,0.1643, -0.1406, -0.0940,\\
-0.0788, 0.0214, -0.0237,0, 0,0, 0,0, 0 ,0, 0,0],
\end{eqnarray*}
which are all stable. The system and output noise are, respectively, $\mb{v}_k\sim\mathcal{N}(0,0.05)$ and ~$\mb{r}^i_k\sim\mathcal{N}(0,0.2)$. As system initial state we choose a random initial value between $0$ and $3$. The system error evolution over $100$ iterations for three agents are given in Fig.~\ref{figsim}. For the visual clarity, we have squared the errors at each iteration and then summed them over the $n=7$ states of the dynamical system. 
\begin{figure}
\centering
\includegraphics[width=3in]{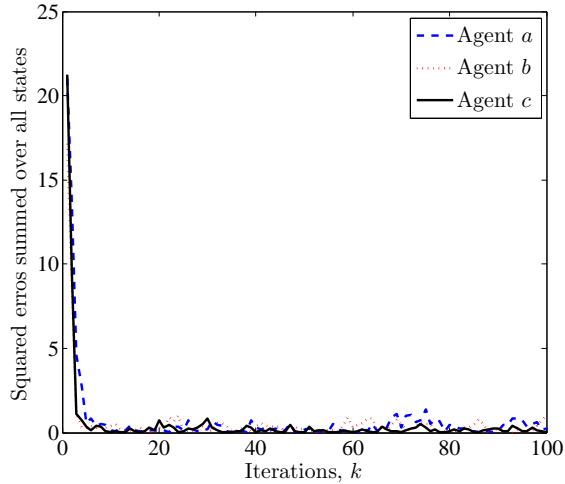}
\caption{Performance of the networked estimator at each agent. The error is squared and then summed over $n=7$ states at each iteration.}
\label{figsim}
\end{figure}
As it can be seen, the estimation error at all agents is bounded despite the fact that system is not stable.
%%%%%%%%%%%%%%%%%%%%%%%%%%%%%%%%%%%%%%%%%%%%%%%%%%%%%%%%%%%%%%%%%%%%%%%%%%%%%%%%%%%%%%%%%%%%%%%%%%%%%%%%%%%%%%%
\section{Conclusion}\label{conc}
In this paper, we study the role of the agent communication
network towards error stability of the NKE protocol~\eqref{lp}--\eqref{le} in the context of single time-scale distributed estimation. As opposed to multiple time-scale strategies where the communication network is irrelevant and diffusion strategies where the estimator error is \textit{irrespective} of system dynamics, here, we take into account both system dynamics and communication network. We show that the NKE is able to track even potentially unstable dynamical systems, i.e., the networked estimator is observable for all stable and unstable eigenvalues. We show that under sufficient communication among the agents, the system state is generically observable at every agent. Here, we provide minimal sufficient network connectivity applicable for multi-agent systems subject to constrained communication, e.g., out-of-range geographical distances or costly communication. We define three types of agents/measurement where two types are crucial for observability. We provide two main results on recovering networked observability: (i) with state-fusion, and (ii) with output-fusion. Furthermore, we determine dynamical systems ($S$-rank deficient) for which no state-fusion results in an observable networked estimator and one has to rely on output-fusion as well.

Our results are on the  existence of a network structure for bounded estimation error and further finding such network with minimal links. Because of the \emph{genericity}, the link weights are free parameters and results are independent of any particular fusion rule (e.g., Metropolis-Hastings~\cite{Xiao05distributedaverage}) chosen in~\eqref{lp}--\eqref{le}. Nevertheless, the structure of the underlying agent communication
remains relevant and leads to network/infrastructure design. Furthermore, link weights can be optimized to reduce the error, which is a direction for the future work. It is worth noting that, in general, $S$-rank and other generic properties are easily verified. For example, there are efficient graph theoretic,
\cite{rein_book}, flow theoretic, \cite{Hovelaque:1996:ALS:226451.226452}, and linear programming,\cite{poljak:1989}, methods that can be employed to check for generic properties.

\bibliographystyle{IEEEbib}
\bibliography{bibliography}

\end{document}